\newcommand\bovermat[2]{%
  \makebox[0pt][l]{$\smash{\overbrace{\phantom{%
    \begin{matrix}#2\end{matrix}}}^{\text{#1}}}$}#2}
\newcommand\bundermat[2]{%
  \makebox[0pt][l]{$\smash{\underbrace{\phantom{%
    \begin{matrix}#2\end{matrix}}}_{\text{#1}}}$}#2}
\newtheorem{lemma}{Lemma}
\newtheorem{theorem}{Theorem}
\newtheorem{remark}{Remark}
\newcommand{\removelatexerror}{\let\@latex@error\@gobble}
\newcommand{\proofpart}[2]{%
	\par
	\addvspace{\medskipamount}%
	\noindent\emph{Part #1: #2}\par\nobreak
	\addvspace{\smallskipamount}%
	\@afterheading
}
\newcommand*{\transpose}{%
  {\mathpalette\@transpose{}}%
}
\newcommand*{\@transpose}[2]{%
  \raisebox{\depth}{$\m@th#1\intercal$}%
}
\renewcommand{\mathsf}[1]{#1}
\theoremstyle{definition}
\newtheorem{example}{Example}
\begin{document}

\newcommand{\SB}[3]{
\sum_{#2 \in #1}\biggl|\overline{X}_{#2}\biggr| #3
\biggl|\bigcap_{#2 \notin #1}\overline{X}_{#2}\biggr|
}

\newcommand{\Mod}[1]{\ (\textup{mod}\ #1)}

\newcommand{\overbar}[1]{\mkern 0mu\overline{\mkern-0mu#1\mkern-8.5mu}\mkern 6mu}

\makeatletter
\newcommand*\nss[3]{%
  \begingroup
  \setbox0\hbox{$\m@th\scriptstyle\cramped{#2}$}%
  \setbox2\hbox{$\m@th\scriptstyle#3$}%
  \dimen@=\fontdimen8\textfont3
  \multiply\dimen@ by 4             
  \advance \dimen@ by \ht0
  \advance \dimen@ by -\fontdimen17\textfont2
  \@tempdima=\fontdimen5\textfont2  
  \multiply\@tempdima by 4
  \divide  \@tempdima by 5          
  \ifdim\dimen@<\@tempdima
    \ht0=0pt                        
    \@tempdima=\fontdimen5\textfont2
    \divide\@tempdima by 4          
    \advance \dimen@ by -\@tempdima 
    \ifdim\dimen@>0pt
      \@tempdima=\dp2
      \advance\@tempdima by \dimen@
      \dp2=\@tempdima
    \fi
  \fi
  #1_{\box0}^{\box2}%
  \endgroup
  }
\makeatother

\makeatletter
\renewenvironment{proof}[1][\proofname]{\par
  \pushQED{\qed}%
  \normalfont \topsep6\p@\@plus6\p@\relax
  \trivlist
  \item[\hskip\labelsep
        \itshape
    #1\@addpunct{:}]\ignorespaces
}{%
  \popQED\endtrivlist\@endpefalse
}
\makeatother

\makeatletter
\newsavebox\myboxA
\newsavebox\myboxB
\newlength\mylenA

\newcommand*\xoverline[2][0.75]{%
    \sbox{\myboxA}{$\m@th#2$}%
    \setbox\myboxB\null
    \ht\myboxB=\ht\myboxA%
    \dp\myboxB=\dp\myboxA%
    \wd\myboxB=#1\wd\myboxA
    \sbox\myboxB{$\m@th\overline{\copy\myboxB}$}
    \setlength\mylenA{\the\wd\myboxA}
    \addtolength\mylenA{-\the\wd\myboxB}%
    \ifdim\wd\myboxB<\wd\myboxA%
       \rlap{\hskip 0.5\mylenA\usebox\myboxB}{\usebox\myboxA}%
    \else
        \hskip -0.5\mylenA\rlap{\usebox\myboxA}{\hskip 0.5\mylenA\usebox\myboxB}%
    \fi}
\makeatother

\xpatchcmd{\proof}{\hskip\labelsep}{\hskip3.75\labelsep}{}{}

\pagestyle{plain}

\title{\fontsize{20}{27}\selectfont Multi-Server Private Linear Computation with\\ Joint and Individual Privacy Guarantees}

\author{Nahid Esmati and Anoosheh Heidarzadeh\thanks{The authors are with the Department of Electrical and Computer Engineering, Texas A\&M University, College Station, TX 77843 USA (E-mail: \{nahid, anoosheh\}@tamu.edu).}}


%


\maketitle

\thispagestyle{plain}

\begin{abstract}
This paper considers the problem of multi-server Private Linear Computation, under the joint and individual privacy guarantees. 
In this problem, identical copies of a dataset comprised of $K$ messages are stored on $N$ non-colluding servers, and a user wishes to obtain one linear combination of a $D$-subset of messages belonging to the dataset. 
The goal is to design a scheme for performing the computation such that the total amount of information downloaded from the servers is minimized, while the privacy of the $D$ messages required for the computation is protected. 
When joint privacy is required, the identities of all of these $D$ messages must be kept private jointly, and when individual privacy is required, the identity of every one of these $D$ messages must be kept private individually. 
In this work, we characterize the capacity, which is defined as the maximum achievable download rate, under both joint and individual privacy requirements. 
In particular, we show that when joint privacy is required the capacity is given by ${(1+1/N+\dots+1/N^{K-D})^{-1}}$, and when individual privacy is required the capacity is given by ${(1+1/N+\dots+1/N^{\lceil K/D\rceil-1})^{-1}}$ assuming that $D$ divides $K$, or $K\pmod D$ divides $D$. 
Our converse proofs are based on reduction from two variants of the multi-server Private Information Retrieval problem in the presence of side information. 
Our achievability schemes build up on our recently proposed schemes for single-server Private Linear Transformation and the multi-server private computation scheme proposed by Sun and Jafar. 
Using similar proof techniques, we also establish upper and lower bounds on the capacity for the cases in which the user wants to compute $L$ (potentially more than one) linear combinations. 
Specifically, we show that when joint privacy is required the capacity is upper bounded by ${(1+1/N+\dots+1/N^{(K-D)/L})^{-1}}$ assuming that $L$ divides ${K-D}$, and lower bounded by ${(1+1/N+\dots+1/N^{K-D+L-1})^{-1}}$. 
\end{abstract}





\section{introduction}
In this work, we consider the problem of \emph{multi-server Private Linear Computation} under two different privacy guarantees, referred to as \emph{joint privacy} and \emph{individual privacy}. 
This problem includes $N$ servers storing identical copies of a dataset consisting of $K$ independent and uniformly distributed messages; and 
a user who wishes to compute one linear combination of a set of $D$ messages belonging to the dataset. 
The goal is to design a scheme for performing the computation such that the download rate is maximized (i.e., the total amount of information downloaded from the servers is minimized), while protecting the 
privacy of identities of the $D$ messages required for the computation. 

When joint privacy is required, 
the identities of all $D$ messages required for the computation must be kept private jointly. 
This notion of privacy is relevant when it is required to hide the correlation between the identities of the messages required for the computation~\cite{HES2021JointJournal}. 
On the other hand, when individual privacy is required, 
the identity of every one of the $D$ messages required for the computation must be kept private individually. 
This notion of privacy, which is a relaxed version of joint privacy, is of practical importance in scenarios in which it is required to hide the information about whether an individual message is used for the computation~\cite{HES2021IndividualJournal}.  
When joint privacy or individual privacy is required, the problem is referred to as \emph{Jointly-Private Linear Computation (JPLC)} or \emph{Individually-Private Linear Computation (IPLC)}, respectively.

The JPLC and IPLC problems in the single-server setting were previously studied in~\cite{HS2019PC} and~\cite{HS2020}, respectively.
These problems are special cases of the Private Linear Transformation (PLT) problem with joint privacy guarantees (JPLT)~\cite{EHS2021JointISIT} and individual privacy guarantees (IPLT)~\cite{EHS2021IndividualISIT}.
In PLT, the user is interested in computing $L$ (potentially more than one) linear combinations of a set of $D$ messages.
The multi-server setting of these problems, however, was not studied previously. 
In this work, we take the first step towards understanding the fundamental limits of multi-server JPLT and IPLT. 
In particular, we characterize the capacity of JPLC and IPLC in the multi-server setting, where the capacity of JPLC (or IPLC) is defined as the maximum achievable download rate over all JPLC (or IPLC) schemes.

The JPLC and IPLC problems are closely related to the Private Linear Computation (PLC) problem, which was originally studied in~\cite{SJ2018,MM2018}. 
(Several variants of the PLC problem were also considered in~\cite{OK2018,OLRK2018,TM2019No2,OLRK2020,YLR2020}.)
In PLC, the values of the combination coefficients in the required linear combination must be kept private. 
This privacy requirement is stronger than those in JPLC and IPLC where only the identities (and not the values of the combination coefficients) of the messages required for the computation need to be kept private. 




\subsection{Main Contributions}
In this work, we prove an upper bound on the capacity of JPLC and IPLC by leveraging the existing results on the capacity of two variants of the Private Information Retrieval (PIR) problem. 
We also show the tightness of these bounds by designing JPLC and IPLC schemes that build up on the previously proposed schemes for PLT and PLC.

First, we prove that the capacity of JPLC is given by ${(1+1/N+\dots+1/N^{K-D})^{-1}}$.
To prove this result, we show a reduction from the multi-server PIR with private side information (PIR-PSI) problem~\cite{KGHERS2020} to the JPLC problem. 
In particular, we show that the PIR-PSI problem with $N$ servers, $K$ messages, and $M$ side information messages can be solved by any JPLC scheme for the setting with parameters $N$, $K$, and $D = M+1$.
Using the result of~\cite{CWJ2020} on the capacity of PIR-PSI, we then prove the converse. 
To prove the achievability, we present a JPLC scheme that leverages the single-server JPLT scheme which we recently proposed in~\cite{EHS2021JointISIT} and the multi-server PLC scheme proposed by Sun and Jafar in~\cite{SJ2018}. 
In addition, using similar proof techniques, we establish upper and lower bounds on the capacity of JPLT---which is a generalization of JPLC. 
More specifically, we show that the capacity of JPLT for the setting with parameters $N,K,D,L$ is upper bounded by ${(1+1/N+\dots+1/N^{(K-D)/L})^{-1}}$ assuming that $L\mid (K-D)$, and lower bounded by ${(1+1/N+\dots+1/N^{K-D+L-1})^{-1}}$. 

Next, we prove that the capacity of IPLC is given by ${(1+1/N+\dots+1/N^{\lceil K/D\rceil-1})^{-1}}$ when ${D\mid K}$ or ${K\pmod D \mid D}$. 
We first show that the multi-server PIR with side information (PIR-SI) problem~\cite{KGHERS2020} with $N$ servers, $K$ messages, and $M$ side information messages, can be reduced to the IPLC problem with parameters $N$, $K$, and $D=M+1$. 
Then, we prove the converse by relying on the result of~\cite{LG2020CISS} on the capacity of PIR-SI. 
We prove the achievability by presenting an IPLC scheme that builds on the single-server IPLT scheme which we recently proposed in~\cite{EHS2021IndividualISIT} and the multi-server PLC scheme of~\cite{SJ2018}.

Our results show that JPLC and IPLC can be performed more efficiently than PLC in terms of download rate. 
That is, relaxing the privacy requirement to hide the identities (and not the values of the combination coefficients) of the messages required for the computation can increase the capacity.

Our converse proof techniques suggest that the capacity of different settings of PIR with side information can potentially be instrumental in establishing tight upper bounds on the capacity of PLC (or more generally, PLT) settings in which there is no side information. 
In addition, our achievability schemes suggest that the existing private computation schemes for single-server settings can play a significant role in designing optimal schemes for multi-server settings.  





\subsection{Notation}
For any random variables $\mathbf{X},\mathbf{Y}$, $H(\mathbf{X})$ and $H(\mathbf{X}|\mathbf{Y})$ denote the entropy of $\mathbf{X}$ and the conditional entropy of $\mathbf{X}$ given $\mathbf{Y}$, respectively. 
For any integer $n\geq 1$, we denote $\{1,\dots,n\}$ by $[n]$, and for any integers $n<m$, we denote $\{n,n+1,\dots,m\}$ by $[n:m]$.
We denote the binomial coefficient $\binom{n}{k}$ by $C_{n,k}$.  


\section{Problem Setup}\label{sec:PS}
\subsection{Models and Assumptions}
Let $q$ be an arbitrary prime power, and let $T\geq 1$ be an arbitrary integer. 
Let $\mathbbmss{F}_q$ be a finite field of order $q$, ${\mathbbmss{F}_q^{\times} \triangleq \mathbbmss{F}_q\setminus \{0\}}$ be the multiplicative group of $\mathbbmss{F}_q$, and $\mathbbmss{F}_{q}^{T}$ be the vector space of dimension $T$ over $\mathbbmss{F}_q$.
Let $B\triangleq T\log_2 q$. 
Let $N>1$ be an arbitrary integer, and $K,D\geq 1$ be integers such that ${D\leq K}$. 
We denote by $\mathbbm{W}$ the set of all $D$-subsets (i.e., all subsets of size $D$) of $[K]$. 
Also, we denote by $\mathbbm{V}$ the set of all row-vectors of length $D$ with entries in $\mathbbmss{F}_q^{\times}$.

Consider $N$ non-colluding servers each of which stores an identical copy of $K$ messages ${X_1,\dots,X_K}$, where $X_i\in \mathbbmss{F}_q^{T}$ for $i\in [K]$ is a row-vector of length $T$ with entries in $\mathbbmss{F}_q$. 
Let ${\mathrm{X}\triangleq [X_1^{\transpose},\dots,X_K^{\transpose}]^{\transpose}}$. Note that $\mathrm{X}$ is a matrix of size $K\times T$.
For every ${\mathrm{S}\subset [K]}$, we denote by $\mathrm{X}_{\mathrm{S}}$ the submatrix of $\mathrm{X}$ restricted to its rows indexed by $\mathrm{S}$, i.e., $\mathrm{X}_{\mathrm{S}} = [X_{i_1}^{\transpose},\dots,X_{i_{s}}^{\transpose}]^{\transpose}$, where ${\mathrm{S} = \{i_1,\dots,i_{s}\}}$. 
Note that $\mathrm{X}_{\mathrm{S}}$ is a matrix of size $|\mathrm{S}|\times T$, where $|\mathrm{S}|$ denotes the size of $\mathrm{S}$.
Consider a user who wishes to compute one linear combination of $D$ messages, namely, $\mathrm{Z}^{[\mathrm{W},\mathrm{V}]}\triangleq \mathrm{V} \mathrm{X}_{\mathrm{W}}$, where $\mathrm{W}\in \mathbbm{W}$ is the index set of the $D$ messages required for the computation, and $\mathrm{V}\in \mathbbm{V}$ is the coefficient vector of the required linear combination. 
Note that $\mathrm{Z}^{[\mathrm{W},\mathrm{V}]}$ is a row-vector of length $T$ with entries in $\mathbbmss{F}_q$. 
We refer to $\mathrm{Z}^{[\mathrm{W},\mathrm{V}]}$ as the \emph{demand}, $\mathrm{W}$ as the \emph{support of the demand}, $\mathrm{V}$ as the \emph{coefficient vector of the demand}, 
and $D$ as the \emph{support size of the demand}. 

In this work, we make the following assumptions: 
\begin{enumerate}
\item $\mathbf{X}_1,\dots,\mathbf{X}_K$ are independent and uniformly distributed over $\mathbbmss{F}_{q}^{T}$. 
Thus, $H(\mathbf{X})=KB$, ${H(\mathbf{X}_{\mathrm{S}})= |\mathrm{S}| B}$ for every ${\mathrm{S}\subset [K]}$, and  $H(\mathbf{Z}^{[\mathrm{W},\mathrm{V}]})=B$. 
\item $\mathbf{W}, \mathbf{V}, \mathbf{X}$ are independent random variables. 
\item $\mathbf{W}$ is distributed uniformly over ${\mathbbm{W}}$.\footnote{Under the  assumption that $\mathbf{W}$ has a uniform distribution over $\mathbbmss{W}$, it follows that ${\Pr(\mathbf{W}=\tilde{\mathrm{W}})=1/C_{K,D}}$ for all ${\tilde{\mathrm{W}}\in \mathbbmss{W}}$, and ${\Pr(i\in \mathbf{W})}=\sum_{\tilde{\mathrm{W}}\in \mathbbmss{W}: i\in \tilde{\mathrm{W}}} \Pr(\mathbf{W}=\tilde{\mathrm{W}})=C_{K-1,D-1}/C_{K,D}=D/K$ for all ${i\in [K]}$.}  
\item $\mathbf{V}$ is distributed uniformly over ${\mathbbm{V}}$. 
\item The demand's support size $D$ 
and the distribution of $(\mathbf{W},\mathbf{V})$ are initially known by the server, whereas the realization $(\mathrm{W},\mathrm{V})$ is initially unknown to the server.
\end{enumerate}

\subsection{Privacy and Recoverability Conditions}
For each ${n\in [N]}$, the user generates a query $\mathrm{Q}_n^{[\mathrm{W},\mathrm{V}]}$ given $\mathrm{W}$ and $\mathrm{V}$, simply denoted by $\mathrm{Q}_n$, and sends it to server $n$. 
For simplicity, we denote $\mathbf{Q}_n^{[\mathbf{W},\mathbf{V}]}$ by $\mathbf{Q}_n$. 
Each query $\mathrm{Q}_n$ is a deterministic or stochastic function of $\mathrm{W}$ and $\mathrm{V}$. 

The queries $\mathrm{Q}_n$'s must satisfy a privacy condition. 
In this work, we consider two different privacy conditions: 
\begin{itemize}
    \item \emph{Joint Privacy:} Given the query $\mathrm{Q}_n$, every $D$-subset of message indices must be equally likely to be the demand's support $\mathbf{W}$ from the perspective of server $n$, i.e., for every $\tilde{\mathrm{W}}\in \mathbbm{W}$, for all $n\in [N]$ it must hold that
\begin{equation*}
\Pr (\mathbf{W}=\tilde{\mathrm{W}}|\mathbf{Q}_n=\mathrm{Q}_n)=\Pr(\mathbf{W}=\tilde{\mathrm{W}}). 
\end{equation*} 
    \item \emph{Individual Privacy:} Given the query $\mathrm{Q}_n$, every message index must be equally likely to belong to the demand's support $\mathbf{W}$ from the perspective of the server $n$, i.e., for every ${i\in [K]}$, for all $n\in [N]$ it must hold that
\begin{equation*}
\Pr (i\in \mathbf{W}|\mathbf{Q}_n=\mathrm{Q}_n)=\Pr(i\in \mathbf{W}). 
\end{equation*}
\end{itemize} 

Upon receiving the query $\mathrm{Q}_n$, server $n$ generates an answer $\mathrm{A}_n^{[\mathrm{W},\mathrm{V}]}$, simply denoted by $\mathrm{A}_n$, and sends it back to the user. 
For simplicity, we denote $\mathbf{A}_n^{[\mathbf{W},\mathbf{V}]}$ by $\mathbf{A}_n$.
The answer $\mathrm{A}_n$ is a deterministic function of $\mathrm{Q}_n$ and $\mathrm{X}$.
That is, ${H(\mathbf{A}_n|\mathbf{Q}_n,\mathbf{X})=0}$. 

The answers $\mathrm{A}_{[N]}\triangleq\{\mathrm{A}_n\}_{n\in [N]}$, the queries $\mathrm{Q}_{[N]}\triangleq\{\mathrm{Q}_n\}_{n\in [N]}$, and the realization $(\mathrm{W},\mathrm{V})$ must collectively enable the user to retrieve the demand
$\mathrm{Z}^{[\mathrm{W},\mathrm{V}]}$, i.e., 
\[H(\mathbf{Z}| \mathbf{A}_{[N]},\mathbf{Q}_{[N]}, \mathbf{W},\mathbf{V})=0,\] where 
$\mathbf{Z}^{[\mathbf{W},\mathbf{V}]}$ is denoted by $\mathbf{Z}$ for the ease of notation. 
We refer to this condition as the \emph{recoverability condition}. 

\subsection{Problem Statement}
The problem is to design a protocol for generating a collection of queries $\mathrm{Q}_{1}^{[\mathrm{W},\mathrm{V}]},\dots,\mathrm{Q}_{N}^{[\mathrm{W},\mathrm{V}]}$ and the corresponding answers $\mathrm{A}_{1}^{[\mathrm{W},\mathrm{V}]},\dots,\mathrm{A}_{N}^{[\mathrm{W},\mathrm{V}]}$ for any given $\mathrm{W}$ and $\mathrm{V}$
such that the privacy and recoverability conditions are satisfied.
We refer to this problem as \emph{Jointly-Private Linear Computation (JPLC)} or \emph{Individually-Private Linear Computation (IPLC)} when joint or individual privacy is required, respectively.

The \emph{rate} of a JPLC or IPLC protocol is defined as the ratio of the entropy of the demand (i.e., $H(\mathbf{Z})=B$) to the total entropy of the answers (i.e., $H(\mathbf{A}_{[N]})$). 
We define the \emph{capacity} of JPLC (or IPLC) setting as the supremum of rates over all JPLC (or IPLC) protocols, all field sizes $q$, and all message lengths $T$.
In this work, our goal is to characterize 
the capacity of JPLC and IPLC settings in terms of $N,K,D$. 
 

\section{Main Results}
This section present our main results.  
Theorems~\ref{thm:MSJPLC} and~\ref{thm:MSIPLC} characterize the capacity of JPLC and IPLC settings, respectively. 
The proofs are given in Sections~\ref{sec:MSJPLC} and~\ref{sec:MSIPLC}, respectively. 

\begin{theorem}\label{thm:MSJPLC}
For the JPLC setting with $N$ servers, $K$ messages, and demand's support size $D$, the capacity is given by 
\begin{equation}\label{eq:JPLCCap}
\left(1+\frac{1}{N}+\frac{1}{N^2}+\dots+\frac{1}{N^{K-D}}\right)^{-1}.    
\end{equation} 	
\end{theorem}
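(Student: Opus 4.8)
The plan is to prove the upper bound by a reduction to a PIR problem with side information, and the matching lower bound by composing two known schemes. For the \emph{converse}, I would show that any JPLC protocol for $(N,K,D)$ yields, with the same download cost, a protocol for multi-server PIR with private side information (PIR-PSI) with $N$ servers, $K$ messages, and $M=D-1$ side information messages. Given a PIR-PSI instance with demand index $\theta$ and a private side information set $\mathcal{S}$ with $|\mathcal{S}|=M$ and $\theta\notin\mathcal{S}$, set $\mathrm{W}:=\{\theta\}\cup\mathcal{S}\in\mathbbm{W}$ and $\mathrm{V}:=(1,\dots,1)\in\mathbbm{V}$, and run the given JPLC protocol. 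Its answers let the user recover $\mathrm{Z}^{[\mathrm{W},\mathrm{V}]}=\sum_{i\in\mathrm{W}}X_i=X_\theta+\sum_{i\in\mathcal{S}}X_i$, and since $\{X_i\}_{i\in\mathcal{S}}$ is known to the user, $X_\theta$ follows, so the recoverability condition of PIR-PSI holds. Choosing the prior on $(\theta,\mathcal{S})$ so that $\mathrm{W}=\{\theta\}\cup\mathcal{S}$ is uniform over $\mathbbm{W}$ and $\theta$ is uniform within $\mathrm{W}$, and noting that each query $\mathbf{Q}_n$ is a function of $(\mathbf{W},\mathbf{V})$ and private randomness only (with $\mathbf{V}$ fixed), so it depends on $(\theta,\mathcal{S})$ only through $\mathbf{W}$, the joint privacy guarantee $\Pr(\mathbf{W}=\tilde{\mathrm{W}}\mid\mathbf{Q}_n)=\Pr(\mathbf{W}=\tilde{\mathrm{W}})$ immediately gives $\Pr(\theta=i\mid\mathbf{Q}_n)=\Pr(\theta=i)$ for every $i\in[K]$ and every $n$, which is the privacy requirement of PIR-PSI. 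Since both the demand $\mathrm{Z}$ and a PIR-PSI message have entropy $B$ and the download is unchanged, the rate is preserved; invoking the PIR-PSI capacity $(1+1/N+\dots+1/N^{K-M-1})^{-1}$ from \cite{CWJ2020} and substituting $M=D-1$ yields the upper bound $(1+1/N+\dots+1/N^{K-D})^{-1}$.

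For the \emph{achievability}, I would build the JPLC scheme in two layers. In the first layer, following the single-server JPLT construction of \cite{EHS2021JointISIT} (specialized to computing a single combination), the user picks a full-row-rank $(K-D+1)\times K$ matrix $\mathrm{G}$ over $\mathbbmss{F}_q$, for $q$ large enough, such that (i) $\overline{\mathrm{V}}$, the length-$K$ vector with entries $\mathrm{V}$ on $\mathrm{W}$ and zeros elsewhere, lies in the row space of $\mathrm{G}$, so $\mathrm{Z}^{[\mathrm{W},\mathrm{V}]}=\overline{\mathrm{V}}\mathrm{X}=\mathrm{c}\,(\mathrm{G}\mathrm{X})$ for a known coefficient vector $\mathrm{c}$; and (ii) conditioned on any value of $\mathbf{W}$, the distribution of $\mathrm{G}$ is the same, which is exactly the property that makes the single-server scheme jointly private and which holds because a generic $(K-D+1)$-dimensional subspace of $\mathbbmss{F}_q^K$ meets every $D$-dimensional coordinate subspace in a line of full support. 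The $K-D+1$ rows of $\mathrm{G}\mathrm{X}$ are then i.i.d.\ uniform over $\mathbbmss{F}_q^T$ and serve as ``virtual messages.'' In the second layer, the user applies the multi-server private computation scheme of Sun and Jafar \cite{SJ2018} with $N$ servers and $K-D+1$ messages to privately retrieve the linear combination $\mathrm{c}$ of the virtual messages: the matrix $\mathrm{G}$ is sent to every server (so each can form the virtual messages) and the Sun--Jafar sub-query $\mathrm{Q}_n^{\mathrm{SJ}}$ appended for server $n$ handles the retrieval. Recoverability is immediate, and the download cost equals that of the Sun--Jafar scheme, giving rate $(1+1/N+\dots+1/N^{K-D})^{-1}$ after choosing $T$ divisible by the appropriate power of $N$.

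The main obstacle is verifying joint privacy of the composed scheme. Each server sees the pair $(\mathrm{G},\mathrm{Q}_n^{\mathrm{SJ}})$; the distribution of $\mathrm{G}$ is $\mathbf{W}$-independent by the first layer, but one must rule out leakage through the interaction of the two layers, since given $\mathrm{G}$ the combination $\mathrm{c}$ determines $\overline{\mathrm{V}}$ and hence $\mathbf{W}$. Thus the second-layer query must reveal \emph{nothing} about $\mathrm{c}$ even conditioned on $\mathrm{G}$, which is precisely the privacy guarantee of the Sun--Jafar scheme ($\mathrm{Q}_n^{\mathrm{SJ}}$ is independent of the demanded combination); combining this with the $\mathbf{W}$-independence of $\mathrm{G}$ shows that the conditional distribution of $(\mathrm{G},\mathrm{Q}_n^{\mathrm{SJ}})$ given $\mathbf{W}=\tilde{\mathrm{W}}$ does not depend on $\tilde{\mathrm{W}}$, i.e., $\Pr(\mathbf{W}=\tilde{\mathrm{W}}\mid\mathbf{Q}_n)=\Pr(\mathbf{W}=\tilde{\mathrm{W}})$. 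A secondary technical point is ensuring $q$ is large enough for the first-layer construction to be ``generic'' and $T$ meets the Sun--Jafar divisibility requirement; both are harmless since capacity is defined as a supremum over $q$ and $T$.
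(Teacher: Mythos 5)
Your overall strategy is the same as the paper's: the converse via a reduction from multi-server PIR with private side information with $M=D-1$ side-information messages, and achievability by composing the single-server JPLT construction of a $(K-D+1)\times K$ matrix $\mathrm{G}$ with the Sun--Jafar private computation scheme run on the $K-D+1$ virtual messages $\mathrm{G}\mathrm{X}$. The achievability half, including the two-layer privacy argument and the divisibility/field-size caveats, matches the paper's proof in all essentials.

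There is, however, one step in your converse that would fail as written: you fix the coefficient vector $\mathrm{V}=(1,\dots,1)$. The joint-privacy guarantee of a JPLC protocol, $\Pr(\mathbf{W}=\tilde{\mathrm{W}}\mid\mathbf{Q}_n=\mathrm{Q}_n)=\Pr(\mathbf{W}=\tilde{\mathrm{W}})$, is only promised under the model of Section~II, in which $\mathbf{V}$ is uniform over $\mathbbm{V}$ and independent of $\mathbf{W}$; it is an averaged statement over $\mathbf{V}$. A protocol can satisfy it while $\Pr(\mathbf{W}=\tilde{\mathrm{W}}\mid\mathbf{Q}_n=\mathrm{Q}_n,\mathbf{V}=(1,\dots,1))$ still depends on $\tilde{\mathrm{W}}$, and in your reduction the servers know that $\mathbf{V}$ is deterministically all-ones, so nothing prevents such a protocol from leaking $\mathbf{W}$ (and hence $\theta$). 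The paper sidesteps this by drawing the combination coefficients uniformly at random from $\mathbb{F}_q^{\times}$, so that the induced distribution of $(\mathbf{W},\mathbf{V})$ exactly matches the JPLC model; the same one-line change repairs your argument, since recoverability of $X_{\theta}$ only needs the coefficients to be nonzero. A second, smaller point: you describe the privacy requirement of PIR-PSI as hiding only $\theta$, but PIR-PSI (and the capacity expression you invoke) requires hiding the pair $(\theta,\mathcal{S})$ jointly; if only $\theta$-privacy were established, one could invoke only the larger PIR-SI capacity, which would not yield the tight bound. Your own observations --- that the query depends on $(\theta,\mathcal{S})$ only through $\mathbf{W}$, together with the privacy of $\mathbf{W}$ --- do in fact give the joint guarantee, so this is a misstatement of what must be verified rather than a missing ingredient, but it should be checked explicitly, as the paper does by verifying both that $\mathbf{W}$ is private and that $\theta$ remains uniform within $\mathbf{W}$ given the query.
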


To prove the converse, we show that the PIR-PSI problem can be reduced to the multi-server JPLC problem. 
More specifically, we show that the problem of PIR-PSI with $N$ servers, $K$ messages, and $M$ messages as side information can be solved using any JPLC protocol for $N$ servers, $K$ messages, and demand's support size $D = M+1$.
Using this reduction, we prove the converse bound on the capacity of JPLC based on the result of~\cite{CWJ2020} on the capacity of PIR-PSI. 
We prove the achievability result by presenting a multi-server JPLC protocol that achieves the converse bound. 
This protocol builds up on the single-server JPLT scheme of~\cite{EHS2021JointISIT}, and the multi-server PLC scheme of~\cite{SJ2018}. 
The single-server JPLT scheme is used for constructing the smallest possible set of linear combinations of the messages, which we refer to as coded messages, 
that satisfies the following two requirements:
\begin{itemize}
    \item[(i)] For every $D$-subset of messages, there is a linear combination of these $D$ messages which can be obtained by linearly combining the coded messages.
    \item[(ii)] The linear combination required by the user can be obtained by linearly combining the coded messages. 
\end{itemize}
The multi-server PLC scheme is then used to retrieve the linear combination required by the user, while not revealing which linear combination of coded messages is retrieved.

\begin{remark}\label{rem:JPLC1}
\emph{Theorem~\ref{thm:MSJPLC} extends the result of~\cite[Theorem~1]{EHS2021JointISIT} for single-server JPLC as a special case of single-server JPLT~\cite{EHS2021JointISIT}. 
In JPLT, the user wishes to compute $L$ linear combinations of a $D$-subset of $K$ messages while hiding the index set of the $D$ messages required for the computation. 
As shown in~\cite{EHS2021JointISIT}, the capacity of single-server JPLT is given by $L/(K-D+L)$, and hence, for the single-server setting the capacity of JPLC, which is equivalent to JPLT for $L=1$, is given by $1/(K-D+1)$. 
This result matches the result of Theorem~\ref{thm:MSJPLC} when $N=1$.} 
\end{remark}

\begin{remark}
\normalfont 
Using the same technique as in our converse proof of Theorem~\ref{thm:MSJPLC}, we can show a reduction from the multi-server Multi-Message PIR-PSI (MPIR-PSI) problem~\cite{SSM2018} to the multi-server extension of the JPLT problem introduced in~\cite{EHS2021JointISIT}---which is a generalization of the multi-server JPLC problem considered in this work. 
In particular, we can show that the MPIR-PSI problem with $N$ servers, $K$ messages, $P$ demand messages, and $M$ side information messages can be solved by any multi-server JPLT protocol for the setting in which there are $N$ servers and $K$ messages, and the user wants to compute $L=P$ linear combinations of a set of $D=P+M$ messages.
Using this reduction and the result of~\cite{SSM2018} on the capacity of MPIR-PSI, it is easy to show that the rate of any multi-server JPLT protocol for the setting with parameters $N,K,D,L$ is upper bounded by ${(1+1/N+\dots+1/N^{(K-D)/L})^{-1}}$ when $L\mid (K-D)$. 
The special case of this result for $L=1$ matches the converse bound for the multi-server JPLC problem. 
The tightness of this bound, however, remains unknown in general. 
Using a similar idea as in our achievability scheme for multi-server JPLC, one can design a scheme for multi-server JPLT. 
The idea is to utilize the single-server JPLT scheme of~\cite{EHS2021JointISIT} to construct $K-D+L$ coded messages, and then retrieve each of the $L$ desired coded combinations separately by using the multi-server PLC scheme of~\cite{SJ2018}. 
The rate of of this scheme is given by ${(1+1/N+\dots+1/N^{K-D+L-1})^{-1}}$, which does not match our converse bound for multi-server JPLT in general. 
Proving a tight converse bound and designing an optimal scheme for multi-server JPLT are the focus of an ongoing work. 
\end{remark}

\begin{remark}\label{rem:JPLC2}
\emph{The JPLC problem is closely related to the PLC problem. 
In PLC, there is a collection of $M$ linear combinations on $K$ messages, and the user wishes to compute one of these linear combinations while not revealing the identity of the required linear combination. 
In~\cite[Theorem~1]{SJ2018}, it was shown that the capacity of PLC is given by ${(1+1/N+1/N^2+\dots+1/N^{P-1})^{-1}}$, where $P$ is the maximum number of linearly independent combinations within the collection of $M$ linear combinations. 
When this collection consists of all linear combinations with support of size $D$ (for all possible coefficient vectors), it is easy to see that $P=K$. 
Thus, the capacity of PLC for this setting is given by $(1+1/N+1/N^2+\dots+1/N^{K-1})^{-1}$, which does not depend on $D$. 
This is in contrast to the result of Theorem~\ref{thm:MSJPLC}, because the capacity of JPLC increases as $D$ increases.
This implies that 
relaxing the privacy condition to hide only the support (and not the values of the combination coefficients) of the required linear combination can increase the capacity.}
\end{remark}

\begin{theorem}\label{thm:MSIPLC}
For the IPLC setting with $N$ servers, $K$ messages, and demand's support size $D$, the capacity is given by
\begin{equation}\label{eq:IPLCCap}
\left(1+\frac{1}{N}+\frac{1}{N^2}+\dots+\frac{1}{N^{\lceil K/D\rceil-1}}\right)^{-1},
\end{equation} if $R=0$ or $R\mid D$, where $R \triangleq K\pmod D$.\end{theorem}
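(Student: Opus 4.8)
The plan is to prove the two bounds separately: a \emph{converse} obtained by reducing the multi-server PIR with side information (PIR-SI) problem to IPLC and quoting the PIR-SI capacity from~\cite{LG2020CISS}, and a matching \emph{achievability} obtained by composing the single-server IPLT scheme of~\cite{EHS2021IndividualISIT} with the multi-server PLC scheme of~\cite{SJ2018}.

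For the converse, I would start from a PIR-SI instance with $N$ servers, $K$ messages, a demand index $\theta$ uniform over $[K]$, and a side information set $\mathrm{S}$ of size $M=D-1$ drawn uniformly among the $(D-1)$-subsets of $[K]\setminus\{\theta\}$; the user knows $\mathrm{X}_{\mathrm{S}}$, wants $\mathrm{X}_\theta$, and must keep $\theta$ individually private at each server. Given any IPLC protocol for parameters $N,K,D$, I would have the PIR-SI user put $\mathrm{W}=\mathrm{S}\cup\{\theta\}$, draw $\mathrm{V}$ uniformly over the length-$D$ vectors with nonzero entries, run the IPLC protocol to obtain $\mathrm{V}\mathrm{X}_{\mathrm{W}}$, and recover $\mathrm{X}_\theta$ by subtracting the known contribution $\sum_{i\in\mathrm{S}}v_i\mathrm{X}_i$ and rescaling; this is well-defined precisely because $\mathrm{V}$ has no zero entry. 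The steps I would then carry out are: (1) check that this induces $\mathrm{W}$ uniform over the $D$-subsets of $[K]$ and $\mathrm{V}$ uniform and independent of $\theta$ given $\mathrm{W}$, so the IPLC distributional assumptions hold; (2) show $\Pr(\theta=i\mid\mathrm{W}=w,\mathrm{Q}_n)=1/D$ for $i\in w$, so that averaging over $\mathrm{W}$ yields $\Pr(\theta=i\mid\mathrm{Q}_n)=\tfrac1D\Pr(i\in\mathrm{W}\mid\mathrm{Q}_n)=\tfrac1D\cdot\tfrac DK=\tfrac1K$, i.e., individual privacy of $\mathrm{W}$ implies $\theta$-privacy; (3) note the download cost is unchanged and the recovered message has entropy $B$, so the PIR-SI rate equals the IPLC rate. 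Combining with the PIR-SI capacity $(1+1/N+\dots+1/N^{\lceil K/(M+1)\rceil-1})^{-1}$ from~\cite{LG2020CISS} and $M+1=D$ then gives the claimed upper bound in the regime $R=0$ or $R\mid D$.

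For the achievability, set $P=\lceil K/D\rceil$. I would invoke the single-server IPLT scheme of~\cite{EHS2021IndividualISIT} as a black box that, given $\mathrm{W}$ and $\mathrm{V}$, outputs a stochastically chosen set $\mathcal{Y}$ of $P$ linearly independent ``coded messages'' (linear combinations of $\mathrm{X}_1,\dots,\mathrm{X}_K$) such that $\mathrm{Z}^{[\mathrm{W},\mathrm{V}]}\in\mathrm{span}(\mathcal{Y})$ and, conditioned on $\mathcal{Y}$ alone, every $i\in[K]$ satisfies $\Pr(i\in\mathrm{W}\mid\mathcal{Y})=D/K$. The multi-server protocol then treats $\mathcal{Y}$ as the public list of candidate linear combinations and applies the multi-server PLC scheme of~\cite{SJ2018} to privately retrieve $\mathrm{Z}^{[\mathrm{W},\mathrm{V}]}$ from $\mathcal{Y}$. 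I would then verify: recoverability (immediate from the span property and correctness of~\cite{SJ2018}); individual privacy (since the servers do not collude, server $n$'s view is $\mathcal{Y}$ together with its own query and answer, and the PLC guarantee keeps the identity of the retrieved element of $\mathcal{Y}$ hidden, so the server's posterior on $\mathrm{W}$ equals its posterior given $\mathcal{Y}$, which is the required one); and the rate, which equals the PLC rate for $P$ linearly independent candidates, namely $(1+1/N+\dots+1/N^{P-1})^{-1}$, matching the converse.

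The part I expect to be the main obstacle is the achievability when $D\nmid K$: designing the $P$ coded messages inside the single-server subroutine so that the span condition and the identity $\Pr(i\in\mathrm{W}\mid\mathcal{Y})=D/K$ hold \emph{simultaneously} is exactly where the hypothesis $R\mid D$ is needed---when $D\mid K$ one can take a uniformly random partition of $[K]$ into $P$ blocks of size $D$ with $\mathrm{W}$ as one block, so the coded messages have disjoint supports and each $i$ lies in a unique block, but when $D\nmid K$ the supports must overlap and the covering has to be balanced carefully so that no index is over- or under-represented. A secondary technical point is the privacy bookkeeping for the composition: arguing rigorously that exposing the \emph{same} set $\mathcal{Y}$ to all $N$ non-colluding servers and layering the Sun--Jafar PLC scheme on top of it leaks nothing about $\mathbf{W}$ beyond what $\mathcal{Y}$ itself reveals, and, on the converse side, that the reduction genuinely respects the uniformity assumptions on $(\mathbf{W},\mathbf{V})$.
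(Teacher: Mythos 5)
Your overall architecture is the same as the paper's: the converse is proved by exactly the reduction you describe (PIR-SI with $M=D-1$ side-information messages solved via an IPLC protocol with support size $D=M+1$, then invoking the capacity of~\cite{LG2020CISS}), and the privacy bookkeeping you outline---establishing $\Pr(\theta=i\mid \mathrm{W}=w,\mathrm{Q}_n)=1/D$ from the conditional independence of the query and $\theta$ given $\mathrm{W}$, then averaging against $\Pr(i\in\mathrm{W}\mid\mathrm{Q}_n)=D/K$ supplied by individual privacy---is precisely the chain of equalities in the paper's Lemma~3. That half of your proposal is complete and correct.

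The gap is where you predicted it, in the achievability for $R\mid D$ with $R\neq 0$, and it is a genuine gap rather than a routine verification: the entire reason the theorem carries the hypothesis ``$R=0$ or $R\mid D$'' is that one must actually exhibit $\lceil K/D\rceil$ linearly independent coded messages together with a family of ``coded combinations'' with $D$-supports such that (a) the demand is one of these combinations and (b) each index $i$ appears in the support of $\mathbf{W}$ with unconditional probability $D/K$ given everything revealed. The paper does this with an explicit matrix whose first $\lfloor K/D\rfloor-1$ blocks are disjoint blocks of size $D$ and whose last two rows overlap on $D/R+1$ sub-blocks of size $R$ scaled by evaluation points $\omega_i$, combined with a randomized choice between two placement algorithms (selected with probabilities $D/K$ and $1-D/K$) for where the demand's coefficients are embedded; none of this is derivable from the black-box properties you assume. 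A second, smaller inaccuracy: when $R\neq 0$ the demand is \emph{not} an element of $\mathcal{Y}$ but a linear combination of two coded messages, so the candidate list fed to the Sun--Jafar PLC scheme cannot be $\mathcal{Y}$ itself---it must be the set of $\lfloor K/D\rfloor + D/R$ coded combinations with $D$-supports lying in the row space of the coding matrix (which still spans only a $\lceil K/D\rceil$-dimensional space, so your rate computation survives). As written, ``apply PLC to retrieve $\mathrm{Z}^{[\mathrm{W},\mathrm{V}]}$ from $\mathcal{Y}$'' is not a well-posed instance of the PLC problem in that case.
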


We prove the converse by showing that the PIR-SI problem is reducible to the multi-server IPLC problem. 
To be more specific, we prove the converse bound by relying on the result of~\cite{LG2020CISS} for the capacity of multi-server PIR-SI, and showing that any IPLC protocol for $N$ servers, $K$ messages, and demand's support size $D$ can be used for solving the problem of PIR-SI with $N$ servers, $K$ messages, and $M = D-1$ messages as side information. 
To prove the achievability of the converse bound, we propose a multi-server IPLC protocol that builds up on the single-server IPLT scheme of~\cite{EHS2021IndividualISIT} and the multi-server PLC scheme of~\cite{SJ2018}. 
The single-server IPLT scheme is used for constructing the smallest possible set of coded messages (i.e., linear combinations of the messages) and a probability distribution associated with linear combinations of coded messages, which we refer to as coded combinations, that satisfy the following two requirements: 
\begin{itemize}
    \item[(i)] For all $i\in[K]$, the sum of probabilities associated with all coded combinations whose support has size $D$ and contains the message index $i$ is the same.
    \item[(ii)] The linear combination required by the user is one of the coded combinations. 
\end{itemize}
The multi-server PLC scheme is then utilized to privately retrieve the linear combination required by the user. 



\begin{remark}\label{rem:IPLC1} 
\emph{The single-server IPLC problem is a special case of the single-server IPLT problem~\cite{EHS2021IndividualISIT}.
Similar to JPLT, in IPLT the user wishes to compute $L$ linear combinations of a $D$-subset of $K$ messages. 
The privacy condition in IPLT is, however, a relaxed version of that in JPLT. 
In particular, unlike JPLT, in IPLT the identities of the $D$ messages required for the computation do not need to be protected jointly; instead, the identity of each of these messages must be kept private individually. 
As shown in~\cite[Theorem~1]{EHS2021IndividualISIT}, when $R\leq L$ or $R\mid D$, the capacity of single-server IPLT is given by $(\lfloor K/D\rfloor + \min\{1,R/L\})^{-1}$. 
Specializing this result for $L=1$, when $R=0$ or $R\mid D$, the capacity of single-server IPLC is given by $(\lfloor K/D\rfloor + \min\{1,R\})^{-1}$, or equivalently, $\lceil K/D\rceil^{-1}$, matching the result of Theorem~\ref{thm:MSIPLC} for $N=1$. 
Using our proof techniques and the result of~\cite[Theorem~1]{EHS2021IndividualISIT} for $L=1$, it is easy to derive lower and upper bounds on the capacity of multi-server IPLC for the settings in which $R\neq 0$ and $R\nmid D$.
Notwithstanding, the capacity of both single-server and multi-server IPLC remains open for these settings.}
\end{remark}

\begin{remark}\label{rem:IPLC2}
\emph{Comparing the results of Theorems~\ref{thm:MSJPLC} and~\ref{thm:MSIPLC}, it is obvious that IPLC can be performed more efficiently than JPLC in terms of download rate. 
The advantage of IPLC over JPLC for fixed $K$ and $D$ is more pronounced for smaller $N$; and as $N$ grows unbounded, the capacity of both JPLC and IPLC converges to $1$. 
In addition, for a fixed $N$, when $K$ and $D$ grow unbounded at the same speed (i.e., $K/D$ is fixed), the capacity of JPLC converges to $1-1/N$, whereas the capacity of IPLC remains constant and is greater than the asymptotic capacity of JPLC by a factor of $1/(1-1/N^{\lceil K/D\rceil})$.}
\end{remark}

\section{Proof of Theorem~\ref{thm:MSJPLC}}\label{sec:MSJPLC}

\subsection{Converse Proof}\label{subsec:MSJPLC-Conv}
In this section, we prove the converse part of Theorem~\ref{thm:MSJPLC}, by upper bounding the rate of JPLC protocols in terms of the parameters $N,K,D$. 
The upper bound holds for any field size $q$ and any message length $T$.

\begin{lemma}\label{lem:JPLC-Conv}
The rate of any JPLC protocol for the setting with parameters $N,K,D$ is upper bounded by~\eqref{eq:JPLCCap}.
\end{lemma}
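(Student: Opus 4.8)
The strategy is to prove the converse by a reduction argument: show that any JPLC protocol for parameters $N, K, D$ yields a PIR-PSI protocol for $N$ servers, $K$ messages, and $M = D-1$ side information messages, and then invoke the known capacity of PIR-PSI from~\cite{CWJ2020}. Recall that in PIR-PSI, the user privately holds $M$ messages as side information and wishes to retrieve one further message, while keeping the identity of the demanded message private from each server (but the side information set may be correlated with the demand index in whatever way the protocol prescribes; the privacy guarantee is the joint one over the demand together with its side information). The capacity of multi-server PIR-PSI is $(1 + 1/N + \dots + 1/N^{K-M-1})^{-1} = (1 + 1/N + \dots + 1/N^{K-D})^{-1}$, which is exactly the expression~\eqref{eq:JPLCCap} we want.

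First I would set up the reduction precisely. Given a PIR-PSI instance where the user wants message $X_j$ and holds $X_{\mathcal{M}}$ with $|\mathcal{M}| = M = D-1$ (and $j \notin \mathcal{M}$), define $\mathrm{W} = \{j\} \cup \mathcal{M}$, a $D$-subset of $[K]$, and pick a coefficient vector $\mathrm{V} \in \mathbbm{V}$ with a nonzero entry in the coordinate corresponding to $j$ — say all entries equal to $1$. The user runs the JPLC protocol with demand support $\mathrm{W}$ and coefficients $\mathrm{V}$: it sends the JPLC queries $\mathrm{Q}_n$, receives the answers $\mathrm{A}_n$, and by the JPLC recoverability condition reconstructs $\mathrm{Z}^{[\mathrm{W},\mathrm{V}]} = \mathrm{V}\mathrm{X}_{\mathrm{W}} = X_j + \sum_{i\in\mathcal{M}} X_i$. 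Since the user already knows $X_{\mathcal{M}}$ as side information, it subtracts off $\sum_{i\in\mathcal{M}} X_i$ to recover $X_j$. Hence the PIR-PSI recoverability condition holds. For privacy: the JPLC joint-privacy condition says that for every $\tilde{\mathrm{W}}\in\mathbbm{W}$, $\Pr(\mathbf{W} = \tilde{\mathrm{W}} \mid \mathbf{Q}_n = \mathrm{Q}_n) = \Pr(\mathbf{W} = \tilde{\mathrm{W}})$, i.e. from server $n$'s view the demanded $D$-set is uniform over all $C_{K,D}$ choices; in particular the pair $(\text{demand index}, \text{side information set})$ — which is exactly a choice of a $D$-set together with a marked element, or under a uniform prior on that pair, still uniform in its $D$-set marginal — is protected at least as strongly as the PIR-PSI privacy definition requires. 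I would match up the exact distributional assumptions (the footnote on the uniform prior for $\mathbf W$, and the corresponding prior used in the PIR-PSI capacity result of~\cite{CWJ2020}) to make sure the reduction is legitimate under the hypotheses of that theorem.

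Once the reduction is in place, the rate bound is immediate: the download cost of the constructed PIR-PSI protocol equals $H(\mathbf{A}_{[N]})$, the download cost of the JPLC protocol, and the reconstructed object has entropy $H(\mathbf{Z}) = B = H(X_j)$, so the PIR-PSI rate equals the JPLC rate. Since no PIR-PSI protocol can exceed the capacity $(1 + 1/N + \dots + 1/N^{K-D})^{-1}$ (valid for all $q, T$), the JPLC rate is bounded by the same quantity, which is~\eqref{eq:JPLCCap}. Taking the supremum over protocols, field sizes, and message lengths preserves the bound.

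The main obstacle I anticipate is the careful alignment of privacy definitions and prior distributions between the JPLC model and the PIR-PSI model of~\cite{CWJ2020,KGHERS2020}: PIR-PSI privacy is typically stated as the demand index being individually uniform (or jointly-with-side-information uniform) given the query, whereas JPLC joint privacy is stated as the $D$-set being uniform given the query. One must verify that JPLC joint privacy is \emph{strong enough} to imply whatever privacy PIR-PSI requires — intuitively it should be, since hiding the whole $D$-set (demand plus side information) jointly is the strongest such guarantee — and that the side information structure induced by the reduction (the user knowing $X_{\mathcal{M}}$ for $\mathcal{M} = \mathrm{W}\setminus\{j\}$) is consistent with the side information model under which the $(K-D)$-exponent capacity formula was derived. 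A secondary subtlety is ensuring the coefficient vector can be chosen in $\mathbbm{V}$ (all-nonzero entries) while still isolating $X_j$ after subtraction, which the all-ones vector handles. I would close by remarking that this is precisely the reduction announced in the paragraph following Theorem~\ref{thm:MSJPLC}.
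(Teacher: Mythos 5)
Your overall strategy is exactly the paper's: reduce PIR-PSI with $M=D-1$ side information messages to JPLC, invoke the capacity of PIR-PSI from~\cite{CWJ2020}, and conclude by contradiction. However, two steps in your write-up are genuine gaps rather than routine verifications. First, you fix the coefficient vector to be all ones. The JPLC privacy guarantee is a statement about the conditional distribution of $\mathbf{W}$ given $\mathbf{Q}_n$ \emph{under the model's assumed prior}, which includes $\mathbf{V}$ uniform over $\mathbbm{V}$ and independent of $\mathbf{W}$. A valid JPLC protocol could leak information about $\mathbf{W}$ when run with a deterministic $\mathbf{V}$ while still satisfying joint privacy on average over $\mathbf{V}$; so the reduction must feed the protocol an input whose distribution matches the model. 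The paper does this by drawing all the coefficients uniformly at random from $\mathbb{F}_q^{\times}$ (any such vector still isolates $X_{i^{*}}$ after subtracting the known side information, so nothing is lost), and it also checks that $\mathbf{W}=\{\boldsymbol{i}^{*}\}\cup\mathbf{S}$ is uniform over $\mathbbm{W}$ under the PIR-PSI priors.

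Second, and more importantly, you flag but do not close the key privacy step. PIR-PSI requires hiding both the identity of the demanded message $\boldsymbol{i}^{*}$ and the side-information support $\mathbf{S}$; knowing only that the $D$-set $\mathbf{W}$ is uniform given the query does not by itself rule out that the query biases \emph{which element of} $\mathbf{W}$ is the demand. The paper's proof supplies the missing argument: since $\mathbf{Q}_n$ is a (possibly randomized) function of $(\mathbf{W},\mathbf{V})$ alone, $\mathbf{Q}_n$ and $\boldsymbol{i}^{*}$ are conditionally independent given $\mathbf{W}=\tilde{\mathrm{W}}$, hence $\Pr(\boldsymbol{i}^{*}=i\mid \mathbf{Q}_n=\mathrm{Q}_n,\mathbf{W}=\tilde{\mathrm{W}})=\Pr(\boldsymbol{i}^{*}=i\mid\mathbf{W}=\tilde{\mathrm{W}})=1/D$ for all $i\in\tilde{\mathrm{W}}$. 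Combined with the joint-privacy condition on $\mathbf{W}$ itself, this establishes precisely the two requirements of PIR-PSI privacy. Your "intuitively it should be strong enough" is where the actual work lies; with these two points repaired your argument coincides with the paper's proof.
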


\begin{proof}
To prove the lemma, we show a reduction from the PIR-PSI problem~\cite{KGHERS2020} to the JPLC problem. 
In the PIR-PSI problem with parameters $N,K,M$, 
there are $N$ non-colluding servers each of which stores an identical copy of $K$ messages $X_1,\dots,X_{K}$ (independent and uniformly distributed over $\mathbb{F}_q^{T}$), and 
there is a user who initially knows $M$ of these messages $X_{i_1},\dots,X_{i_M}$ as side information, 
but the support of the user's side information, $\mathrm{S} =\{i_1,\dots,i_M\}$, is not initially known by any of the servers. 
The user wants to retrieve the message $X_{i^{*}}$ for some $i^{*}\in [K]\setminus \mathrm{S}$, where $i^{*}$ is not initially known by any of the servers. 
The goal of the user is retrieve their desired message from the servers with maximum possible download rate, while hiding both the index of the desired message, $i^{*}$, and the support of the side information, $\mathrm{S}$, from any of the servers. 
We refer to $X_{i^{*}}$ as the ``uncoded demand''. 
As shown in~\cite{CWJ2020}, the capacity of PIR-PSI---defined as the maximum achievable download rate---is given by 
\begin{equation}\label{eq:1}
\left(1+1/N+1/N^2+\dots+1/N^{K-M-1}\right)^{-1},    
\end{equation}
under the assumptions that 
$\mathbf{S}$ is distributed uniformly over all $M$-subsets of $[K]$, and 
$\boldsymbol{i}^{*}$ given $\mathbf{S}=\mathrm{S}$ has a uniform distribution over $[K]\setminus \mathrm{S}$.

To show a reduction from PIR-PSI to JPLC, we need to prove that the PIR-PSI problem with parameters $N,K,M$ can be solved by any JPLC protocol for the setting with $N$ servers, $K$ messages, and demand's support size $D=M+1$. 
Once this reduction is established, the proof of the lemma is straightforward by the way of contradiction. 
Suppose that the rate of a JPLC protocol for the setting with parameters $N$, $K$, and $D=M+1$, is higher than ${1/(1+1/N+\dots+1/N^{K-D})}$, or equivalently, ${1/(1+1/N+\dots+1/N^{K-M-1})}$. 
Solving the PIR-PSI problem via this JPLC protocol, one can then achieve a higher rate than the capacity of PIR-PSI given in~\eqref{eq:1}, which is obviously a contradiction.

Consider the PIR-PSI problem with parameters $N,K,M$. 
To show a reduction, suppose that the user employs an arbitrary JPLC protocol for the setting with parameters $N$, $K$, and $D=M+1$, so as to compute the linear combination ${v_1X_{i^{*}}+v_2X_{i_1}+\dots+v_{M}X_{i_{M}}}$, 
where $i^{*}$ is the index of the user's uncoded demand, ${\mathrm{S} = \{i_1,\dots,i_{M}\}}$ is the support of the user's side information, and $v_1,\dots,v_{M}$ are randomly chosen from $\mathbb{F}_q^{\times}$.
We refer to this linear combination as the  ``coded demand''. 
Note that the support and the coefficient vector of the coded demand are given by $\mathrm{W}=\{i^{*}\}\cup \mathrm{S}$ and $\mathrm{V}=[v_1,v_2,\dots,v_M]$, respectively.
Moreover, $\mathrm{W}\in \mathbbmss{W}$ since $\mathrm{W}\subseteq [K]$ and $|\mathrm{W}|=M+1 = D$, and $\mathrm{V}\in \mathbbmss{V}$ by construction. 
It should also be noted that $\mathbf{W} = \{\boldsymbol{i}^{*}\}\cup \mathbf{S}$ is distributed uniformly over $\mathbbmss{W}$. 
This is because $\mathbf{S}$ is distributed uniformly over all $M$-subsets of $[K]$, and $\boldsymbol{i}^{*}$ given $\mathbf{S}=\mathrm{S}$ is distributed uniformly over $[K]\setminus \mathrm{S}$. 
Moreover, $\mathbf{V}$ has a uniform distribution over $\mathbbmss{V}$, by construction.
By these arguments, it should be obvious that this setting matches the JPLC setting defined in Section~\ref{sec:PS}. 
Using an arbitrary JPLC protocol for the setting with parameters $N,K,D=M+1$, for each $n\in [N]$, the user then generates a query $\mathrm{Q}^{[\mathrm{W},\mathrm{V}]}_n$, and sends it to server $n$, and server $n$ sends back the corresponding answer $\mathrm{A}^{[\mathrm{W},\mathrm{V}]}_n$ to the user. 
For simplifying the notation, we denote $\mathrm{Q}^{[\mathrm{W},\mathrm{V}]}_n$ and $\mathrm{A}^{[\mathrm{W},\mathrm{V}]}_n$ by  $\mathrm{Q}_n$ and $\mathrm{A}_n$, respectively. 

To complete the proof of reduction, we need to show that 
the recoverability and privacy conditions of the PIR-PSI problem are satisfied. 
The recoverability of the user's coded demand is guaranteed since any JPLC protocol satisfies the recoverability condition.
Provided that the user can recover their coded demand ${v_1X_{i^{*}}+v_2X_{i_1}+\dots+v_{M}X_{i_M}}$, it is immediate that the user can recover their uncoded demand $\mathrm{X}_{i^{*}}$. 
Thus, the recoverability of the user's uncoded demand is guaranteed.  
To prove that the privacy of both the index of the uncoded demand and the support of the side information is protected, we need to show that, from the perspective of each server, (i) all $\tilde{\mathrm{W}}\in \mathbbmss{W}$ are equally likely to be the union of the index of the uncoded demand and the support of the side information, and (ii) for every $\tilde{\mathrm{W}}\in\mathbbmss{W}$, provided that $\tilde{\mathrm{W}}$ is the support of the coded demand, all $i\in \tilde{\mathrm{W}}$ are equally likely to be the index of the uncoded demand. 

Since any JPLC protocol satisfies the joint privacy condition, ${\Pr(\mathbf{W} = \tilde{\mathrm{W}}|\mathbf{Q}_n = \mathrm{Q}_n) = \Pr(\mathbf{W}=\tilde{\mathrm{W}})}$ for every ${\tilde{\mathrm{W}}\in \mathbbmss{W}}$.  
Since $\mathbf{W} = \{\boldsymbol{i}^{*}\}\cup\mathbf{S}$, 
it follows that the privacy requirement (i) is satisfied. 
Moreover, for every ${\tilde{\mathrm{W}}\in \mathbbmss{W}}$, given that ${\mathbf{W}=\tilde{\mathrm{W}}}$, the two events $\mathbf{Q}_n = \mathrm{Q}_n$ and $\boldsymbol{i}^{*} = i$ for $i\in \tilde{\mathrm{W}}$ are independent. 
This is because $\mathbf{Q}_n$ is a function of $\mathbf{W}$ and $\mathbf{V}$ (and potentially a random key, independent from $\mathbf{W}$ and $\mathbf{V}$), and $\mathbf{Q}_n$ given $\mathbf{W}$ is independent of $\boldsymbol{i}^{*}$. 
Thus, ${\Pr(\boldsymbol{i}^{*}=i|\mathbf{Q}_n = \mathrm{Q}_n,\mathbf{W} = \tilde{\mathrm{W}}) = \Pr(\boldsymbol{i}^{*}=i|\mathbf{W} = \tilde{\mathrm{W}})}$ for all $i\in \tilde{\mathrm{W}}$.
Since ${\Pr(\boldsymbol{i}^{*}=i)}=1/K$ for all $i\in [K]$, 
${\Pr(\mathbf{S}=\tilde{\mathrm{W}}\setminus\{i\}|\boldsymbol{i}^{*}=i)}=1/C_{K-1,D-1}$ for all $i\in \tilde{\mathrm{W}}$, and 
${\Pr(\mathbf{W}=\tilde{\mathrm{W}})}=1/C_{K,D}$ for all $\tilde{\mathrm{W}}\in \mathbbmss{W}$, 
it is easy to verify that ${\Pr(\boldsymbol{i}^{*}=i|\mathbf{W} = \tilde{\mathrm{W}}) = 1/D}$ for all $i\in \tilde{\mathrm{W}}$. 
Thus, ${\Pr(\boldsymbol{i}^{*}=i|\mathbf{Q}_n = \mathrm{Q}_n,\mathbf{W} = \tilde{\mathrm{W}})=1/D}$ for all $i\in \tilde{\mathrm{W}}$. 
This implies that the privacy requirement (ii) is satisfied. 
\end{proof}


\subsection{Achievability Scheme}\label{subsec:MSJPLC-Ach}
In this section, we present a JPLC protocol, termed \emph{Multi-Server Specialized GRS Code}, for all parameters $N,K,D$. (Here, GRS stands for Generalized Reed-Solomon.) 
This protocol is capacity-achieving for any field size $q\geq K$ and any message length $T$ that is an integer multiple of $N^{M}$, where $M\triangleq C_{K,D}$.
An example of this protocol is given in Section~\ref{sec:EX}. 

The Multi-Server Specialized GRS Code protocol consists of three steps as described below.\vspace{0.125cm} 

\textbf{Step 1:} 
First, the user constructs a $J\times K$ matrix $\mathrm{G}$ which generates a specific ${[K,J]}$ GRS code, where ${J\triangleq K-D+1}$, by utilizing the single-server JPLT protocol of~\cite{EHS2021JointISIT} for the special case in which one linear combination is required by the user. 
To avoid repetition, we omit the steps of this protocol, and only present the matrix $\mathrm{G}$ being constructed.


Recall that $\mathrm{W}$ and $\mathrm{V}$ denote the support and the coefficient vector of the user's demand $\mathrm{Z}$. 
Suppose ${\mathrm{W}=\{i_1,\dots,i_D\}}$, ${[K]\setminus \mathrm{W} = \{i_{D+1},\dots,i_K\}}$, and ${\mathrm{V} = [v_1,\dots,v_D]}$. 
Let $\pi$ be a permutation on $[K]$ such that $\pi(j) = i_j$ for $j\in [K]$.
Let $\omega_1,\dots,\omega_K$ be $K$ arbitrary distinct elements from $\mathbb{F}_q$, and 
let ${v_{D+1},\dots,v_{K}}$ be ${K-D}$ randomly chosen (with replacement) elements from $\mathbbmss{F}_q^{\times}$. 

For every ${i\in [J]}$ and every ${j\in [K]}$, the entry ${(i,\pi(j))}$ of the matrix $\mathrm{G}$ is given by ${\alpha_j \omega_j^{i-1}}$, where 
\begin{equation*}
\alpha_j\triangleq 
\begin{cases}\displaystyle
v_j\prod_{k\in [D+1:K]} (\omega_j-\omega_k)^{-1}, & j\in [D],\\ \displaystyle
v_j \prod_{k\in [K]\setminus \{j\}} (\omega_j-\omega_k)^{-1}, & j\in [D+1:K].
\end{cases}
\end{equation*}
Note that the matrix $\mathrm{G}$ generates a $[K,J]$ GRS code, and $\{\alpha_{j}\}_{j\in [K]}$ and $\{\omega_{j}\}_{j\in [K]}$ are the multipliers and the evaluation points of the GRS code generated by $\mathrm{G}$, respectively. 


Let $\mathrm{W}_1,\dots,\mathrm{W}_{M}$ be an arbitrary ordering of the elements in $\mathbbmss{W}$. 
As shown in~\cite{EHS2021JointISIT}, the matrix $\mathrm{G}$ has the following properties:
\begin{itemize}
    \item[(i)] For every $k\in [M]$, the row space of $\mathrm{G}$ contains a unique row-vector $\mathrm{U}_k$ with support $\mathrm{W}_k$ and first nonzero coordinate equal to the first coordinate of $\mathrm{V}$ (i.e., $v_1$).
    \item[(ii)] There exists a unique $k^{*}\in [M]$ such that $\mathrm{W}_{k^{*}}=\mathrm{W}$.
    \item[(iii)] The row-vector $\mathrm{U}_{k^{*}}$ with support $\mathrm{W}_{k^{*}}=\mathrm{W}$, when restricted to its nonzero coordinates, is equal to $\mathrm{V}$.
\end{itemize}

By (i), for every $k\in [M]$, the row-vector $\mathrm{U}_k$ can be written as a unique linear combination of the rows of $\mathrm{G}$. 
Since the rows of $\mathrm{G}$ are linearly independent (by construction), for every $k\in [M]$, there exists a unique row-vector $\mathrm{C}_k$ of length $J$ such that ${\mathrm{U}_k = \mathrm{C}_k \mathrm{G}}$. 
By (ii) and (iii), it is easy to verify that $\mathrm{U}_{k^{*}}\mathrm{X} = \mathrm{C}_{k^{*}}\mathrm{G}\mathrm{X}$ is equal to the user's demand $ \mathrm{Z} = \mathrm{V}\mathrm{X}_{\mathrm{W}}$, where $\mathrm{X} = [\mathrm{X}_1^{\transpose},\dots,\mathrm{X}_K^{\transpose}]^{\transpose}$ and $\mathrm{X}_{\mathrm{W}}$ is the submatrix of $\mathrm{X}$ formed by the rows indexed by $\mathrm{W}$. 

Then, the user sends the matrix $\mathrm{G}$ and the row-vectors $\{\mathrm{C}_k\}_{k\in [M]}$ to each of the servers.\vspace{0.125cm}   

\textbf{Step 2:} Upon receiving $\mathrm{G}$ and $\{\mathrm{C}_k\}_{k\in [M]}$, each server constructs $J$ ``coded messages'' $\mathrm{Y}_1,\dots,\mathrm{Y}_J$, where $\mathrm{Y}_i\triangleq \mathrm{G}_i\mathrm{X}$ for all $i\in [J]$, and $\mathrm{G}_i$ denotes the $i$th row of $\mathrm{G}$. 
Note that $\mathrm{Y}_i$'s are row-vectors of length $T$.
Let $\mathrm{Y} \triangleq [\mathrm{Y}_1^{\transpose},\dots,\mathrm{Y}_J^{\transpose}]^{\transpose}$.
Note that ${\mathrm{Y} = \mathrm{G}\mathrm{X}}$.
Each server then constructs $M$ ``coded combinations'' $\mathrm{Z}_1,\dots,\mathrm{Z}_M$, where ${\mathrm{Z}_k\triangleq \mathrm{C}_k \mathrm{Y}}$. 
Note that $\mathrm{Z}_k$'s are row-vectors of length $T$.
Since ${\mathrm{Z}_k = \mathrm{C}_k\mathrm{Y} = \mathrm{C}_k\mathrm{G}\mathrm{X} = \mathrm{U}_k\mathrm{X}}$, and the support of $\mathrm{U}_k$ is $\mathrm{W}_k$, the support of $\mathrm{Z}_k$ is $\mathrm{W}_k$.
Thus, $\mathrm{Z}_1,\dots,\mathrm{Z}_M$ are $M$ linear combinations of  $\mathrm{X}_1,\dots,\mathrm{X}_K$ with distinct supports $\mathrm{W}_1,\dots,\mathrm{W}_M$, respectively. 
Note, also, that $\mathrm{Z}_{k^{*}} = \mathrm{U}_{k^{*}}\mathrm{X} = \mathrm{V}\mathrm{X}_{\mathrm{W}}=\mathrm{Z}$, as discussed earlier.\vspace{0.125cm}    

\textbf{Step 3:}
Note that $\mathrm{Y}_1,\dots,\mathrm{Y}_J$ are linearly independent combinations of $X_1,\dots,X_K$, and $\mathbf{X}_1,\dots,\mathbf{X}_K$ are independent and uniformly distributed over $\mathbb{F}_q^{T}$. 
Thus, $\mathbf{Y}_1,\dots,\mathbf{Y}_J$ are independent and uniformly distributed over $\mathbb{F}_q^{T}$. 
Note, also, that $\mathrm{Z}_1,\dots,\mathrm{Z}_M$ are linear combinations of the coded messages $\mathrm{Y}_1,\dots,\mathrm{Y}_{J}$, with the coefficient vectors $\mathrm{C}_1,\dots,\mathrm{C}_M$, respectively. 
It is easy to see that $M\geq J$. 
Then, the user and the servers follow the PLC scheme of~\cite{SJ2018} with the coded messages $\mathrm{Y}_1,\dots,\mathrm{Y}_J$ as the ``independent messages'' and the coded combinations $\mathrm{Z}_1,\dots,\mathrm{Z}_M$ as the ``candidate linear combinations''. 
In order to recover the coded combination $Z_{k^{*}}$, for each $n\in [N]$ the user generates a query and sends it to server $n$, and server $n$ then sends back the corresponding answer to the user. 
Note that the scheme of~\cite{SJ2018} is applicable if the message length $T$, which is also the length of each coded message $\mathrm{Y}_i$ and the length of each coded combination $\mathrm{Z}_k$, is an integer multiple of $N^M$.  
The details of the construction of the user's queries and the servers' corresponding answers can be found in~\cite{SJ2018}. 

\begin{lemma}\label{lem:JPLC-Ach}
The Multi-Server Specialized GRS Code protocol is a capacity-achieving JPLC protocol.
\end{lemma}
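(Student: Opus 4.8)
The plan is to verify that the \emph{Multi-Server Specialized GRS Code} protocol satisfies the three requirements of a JPLC scheme---recoverability, joint privacy, and achieving the rate in~\eqref{eq:JPLCCap}---and to do so by reducing each requirement to the corresponding property of the building-block schemes invoked in Steps~1--3. First I would record the structural facts already established in the description of the protocol: the coded messages $\mathbf{Y}_1,\dots,\mathbf{Y}_J$ with $J = K-D+1$ are independent and uniform over $\mathbb{F}_q^T$ (since $\mathrm{G}$ has full row rank and the $\mathbf{X}_i$ are i.i.d.\ uniform), the coded combinations $\mathrm{Z}_1,\dots,\mathrm{Z}_M$ are linear combinations of the $\mathrm{Y}_i$'s with $M = C_{K,D} \ge J$, and $\mathrm{Z}_{k^{*}} = \mathrm{V}\mathrm{X}_{\mathrm{W}} = \mathrm{Z}$ is the user's demand. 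Thus Step~3 is literally an instance of the Sun--Jafar PLC setting of~\cite{SJ2018} with $J$ independent messages and $M$ candidate linear combinations, of which the rank is exactly $J$.

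Recoverability follows immediately: the PLC scheme of~\cite{SJ2018}, run to retrieve the index-$k^{*}$ candidate combination, satisfies its own recoverability condition, so the user reconstructs $\mathrm{Z}_{k^{*}} = \mathrm{Z}$ from the answers $\mathrm{A}_{[N]}$, the queries, and $(\mathrm{W},\mathrm{V})$ (which determine $\mathrm{G}$, the $\mathrm{C}_k$'s, and $k^{*}$). For joint privacy, the key observation is that each server's query in Step~3 is generated by the Sun--Jafar scheme, which hides which of the $M$ candidate combinations is being retrieved; hence from server $n$'s viewpoint the index $k^{*}$ is uniform over $[M]$ given $\mathrm{Q}_n$. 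Since the map $k \mapsto \mathrm{W}_k$ is a bijection between $[M]$ and $\mathbbmss{W}$, and since $\mathbf{W} = \mathrm{W}_{k^{*}}$ with $\mathbf{W}$ uniform over $\mathbbmss{W}$ by assumption, this shows $\Pr(\mathbf{W} = \tilde{\mathrm{W}} \mid \mathbf{Q}_n = \mathrm{Q}_n) = 1/M = \Pr(\mathbf{W} = \tilde{\mathrm{W}})$ for every $\tilde{\mathrm{W}} \in \mathbbmss{W}$. One subtlety worth spelling out is that the matrix $\mathrm{G}$ and the vectors $\{\mathrm{C}_k\}$ are themselves sent to the servers and are functions of $(\mathrm{W},\mathrm{V})$; I would argue, as in the single-server JPLT analysis of~\cite{EHS2021JointISIT}, that the random choice of the multipliers $v_{D+1},\dots,v_K$ and the (fixed public) evaluation points $\omega_1,\dots,\omega_K$ make the distribution of $(\mathrm{G},\{\mathrm{C}_k\})$ independent of which $D$-subset is the true support, so they leak nothing about $\mathbf{W}$ beyond what the Step~3 query leaks about $k^{*}$.

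For the rate, each coded combination and coded message has length $T$, and the Sun--Jafar scheme with $J$ independent messages retrieving one combination has download rate $(1+1/N+\dots+1/N^{J-1})^{-1} = (1+1/N+\dots+1/N^{K-D})^{-1}$; since $H(\mathbf{Z}) = B = T\log_2 q$ equals the entropy of one message in that scheme, the JPLC rate $H(\mathbf{Z})/H(\mathbf{A}_{[N]})$ equals exactly this quantity, matching~\eqref{eq:JPLCCap}. I would also note the divisibility requirement: the Sun--Jafar scheme needs $T$ to be a multiple of $N^M$, which is permitted since the capacity is a supremum over all message lengths $T$, and that any $q \ge K$ suffices for the distinct evaluation points $\omega_1,\dots,\omega_K$. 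Combined with the converse in Lemma~\ref{lem:JPLC-Conv}, this establishes that the protocol is capacity-achieving. The main obstacle is the joint-privacy argument---specifically making rigorous that transmitting $(\mathrm{G},\{\mathrm{C}_k\})$ to the servers does not compromise privacy---which requires carefully invoking the privacy guarantee of the single-server JPLT construction of~\cite{EHS2021JointISIT} rather than treating Step~3 in isolation; everything else is a direct bookkeeping consequence of the cited schemes.
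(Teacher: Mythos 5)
Your proposal is correct and follows essentially the same route as the paper's proof: recoverability and the rate are inherited directly from the Sun--Jafar PLC scheme with $J=K-D+1$ independent coded messages, and joint privacy is argued in the same two layers---the revealed data $(\mathrm{G},\{\mathrm{C}_k\})$ (equivalently, the coded combinations $\mathrm{Z}_1,\dots,\mathrm{Z}_M$) leak nothing about $\mathbf{W}$ by the single-server JPLT privacy guarantee of~\cite{EHS2021JointISIT}, and given that information the PLC scheme hides the index $k^{*}$. Your explicit remarks on the bijection $k\mapsto \mathrm{W}_k$, the divisibility of $T$ by $N^M$, and the field-size requirement $q\geq K$ are details the paper leaves implicit but do not change the argument.
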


\begin{proof}
The rate of the PLC scheme of~\cite{SJ2018} for the setting with $N$ servers, $J$ independent messages, and $M$ ($\geq J$) candidate linear combinations is given by ${(1+1/N+1/N^2+\dots+1/N^{J-1})^{-1}}$. 
In our case, ${J=K-D+1}$ and $M\geq J$. 
Thus, the rate of our protocol is given by ${(1+1/N+1/N^2+\dots+1/N^{K-D})^{-1}}$, which matches the converse bound (see Lemma~\ref{lem:JPLC-Conv}).

The recoverability of the user's demand $\mathrm{Z}$ is guaranteed because the PLC scheme of~\cite{SJ2018} ensures that the user can recover the coded combination $\mathrm{Z}_{k^{*}}$, which is equal to $\mathrm{Z}$ as discussed earlier. 
The proof of joint privacy is as follows. 
Exposing $\mathrm{Z}_1,\dots,\mathrm{Z}_M$ to the servers leaks no information about the support of the user's demand to any of the servers. 
This is because the single-server JPLT protocol of~\cite{EHS2021JointISIT} is guaranteed to protect the privacy of the support of the user's demand in the single-server setting, notwithstanding that $\mathrm{Z}_1,\dots,\mathrm{Z}_M$ are revealed to the server~\cite{EHS2021JointISIT}.
Given $\mathrm{Z}_1,\dots,\mathrm{Z}_M$, the privacy of the support of the user's demand is protected because the PLC scheme of~\cite{SJ2018} guarantees that the privacy of the index of the coded combination $\mathrm{Z}_{k^{*}}$ is protected.
\end{proof}

\section{Proof of Theorem~\ref{thm:MSIPLC}}\label{sec:MSIPLC}

\subsection{Converse Proof}\label{subsec:MSIPLC-Conv}
In this section, we establish an upper bound on the rate of IPLC protocols in terms of the parameters $N,K,D$, which proves the converse for Theorem~\ref{thm:MSIPLC}. 
The upper bound holds for any field size $q$ and any message length $T$.

\begin{lemma}\label{lem:IPLC-Conv}
The rate of any IPLC protocol for the setting with parameters $N,K,D$ is upper bounded by~\eqref{eq:IPLCCap}. 
\end{lemma}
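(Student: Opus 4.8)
The plan is to mirror the converse proof of Lemma~\ref{lem:JPLC-Conv}, but reduce from PIR with side information (PIR-SI)~\cite{KGHERS2020} rather than PIR-PSI. Recall that in PIR-SI with parameters $N,K,M$, there are $N$ non-colluding servers each storing $X_1,\dots,X_K$, a user who knows $M$ messages $X_{i_1},\dots,X_{i_M}$ as side information with support $\mathrm{S}=\{i_1,\dots,i_M\}$, and the user wishes to retrieve $X_{i^*}$ for some $i^*\in[K]\setminus\mathrm{S}$. The crucial difference from PIR-PSI is that only the index $i^*$ of the demand must be kept private from each server (the support $\mathrm{S}$ need not be protected); equivalently, each individual message index must be equally likely to be the demand. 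By the result of~\cite{LG2020CISS}, the capacity of PIR-SI is $(1+1/N+\dots+1/N^{\lceil K/(M+1)\rceil-1})^{-1}$ under the appropriate uniformity assumptions on $\mathbf{S}$ and $\boldsymbol{i}^*$. Setting $D=M+1$ so that $\lceil K/(M+1)\rceil=\lceil K/D\rceil$ gives the desired bound~\eqref{eq:IPLCCap}.

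First I would set up the reduction: given a PIR-SI instance with parameters $N,K,M$, the user picks random coefficients $v_1,\dots,v_M\in\mathbbmss{F}_q^\times$ and invokes an arbitrary IPLC protocol for parameters $N,K,D=M+1$ to compute the ``coded demand'' $v_0 X_{i^*}+v_1 X_{i_1}+\dots+v_M X_{i_M}$ (with $v_0$ likewise a random nonzero coefficient), whose support is $\mathrm{W}=\{i^*\}\cup\mathrm{S}$ and whose coefficient vector lies in $\mathbbmss{V}$. I would then verify that the induced distribution of $(\mathbf{W},\mathbf{V})$ matches the IPLC model: $\mathbf{V}$ is uniform over $\mathbbmss{V}$ by construction, and $\mathbf{W}$ need only satisfy whatever marginal the IPLC privacy condition references — here one must be slightly careful, since the IPLC privacy definition is stated in terms of $\Pr(i\in\mathbf{W})$, so what matters is that $\Pr(i\in\mathbf{W})=D/K$ for all $i$, which follows because $\mathbf{S}$ is uniform over $M$-subsets and $\boldsymbol{i}^*$ is conditionally uniform over the complement. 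Recoverability is immediate: recovering the coded demand and subtracting the known side information $v_1 X_{i_1},\dots,v_M X_{i_M}$ (and dividing by $v_0$) yields $X_{i^*}$.

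Next I would check that the PIR-SI privacy requirement is met. In PIR-SI one needs each server to see every index $i\in[K]$ as equally likely to be the desired message index $i^*$, i.e. $\Pr(\boldsymbol{i}^*=i\mid\mathbf{Q}_n=\mathrm{Q}_n)=\Pr(\boldsymbol{i}^*=i)=1/K$. Writing $\Pr(\boldsymbol{i}^*=i\mid\mathbf{Q}_n)=\sum_{\tilde{\mathrm{W}}\ni i}\Pr(\boldsymbol{i}^*=i\mid\mathbf{Q}_n,\mathbf{W}=\tilde{\mathrm{W}})\Pr(\mathbf{W}=\tilde{\mathrm{W}}\mid\mathbf{Q}_n)$, I would argue as in Lemma~\ref{lem:JPLC-Conv} that $\mathbf{Q}_n$ is a function of $(\mathbf{W},\mathbf{V})$ (and an independent key), so conditioned on $\mathbf{W}=\tilde{\mathrm{W}}$ the query is independent of $\boldsymbol{i}^*$, giving $\Pr(\boldsymbol{i}^*=i\mid\mathbf{Q}_n,\mathbf{W}=\tilde{\mathrm{W}})=\Pr(\boldsymbol{i}^*=i\mid\mathbf{W}=\tilde{\mathrm{W}})=1/D$ for $i\in\tilde{\mathrm{W}}$. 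Then $\Pr(\boldsymbol{i}^*=i\mid\mathbf{Q}_n)=\frac{1}{D}\Pr(i\in\mathbf{W}\mid\mathbf{Q}_n)=\frac{1}{D}\Pr(i\in\mathbf{W})=\frac{1}{D}\cdot\frac{D}{K}=\frac{1}{K}$, where the middle equality is exactly the individual privacy guarantee of the IPLC protocol. This establishes the reduction, and then, exactly as in Lemma~\ref{lem:JPLC-Conv}, a contradiction argument transfers the capacity upper bound~\eqref{eq:1}-analogue from~\cite{LG2020CISS} to the IPLC rate.

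The main obstacle I anticipate is the bookkeeping around the role of the restriction $R=0$ or $R\mid D$: the capacity expression for PIR-SI in~\cite{LG2020CISS} is itself only known (or only takes the clean form $(1+\dots+1/N^{\lceil K/(M+1)\rceil-1})^{-1}$) under a divisibility condition on $K$ and $M+1$, so I must check that this condition translates precisely to ``$R=0$ or $R\mid D$'' with $D=M+1$, and cite the relevant case of~\cite{LG2020CISS} accurately. A secondary subtlety is ensuring the uniformity hypotheses of~\cite{LG2020CISS} on $(\mathbf{S},\boldsymbol{i}^*)$ are genuinely implied by — and not merely consistent with — the IPLC model's assumption that $\mathbf{W}$ is uniform over $\mathbbmss{W}$; one should confirm that choosing $\mathbf{S}$ uniform over $M$-subsets and $\boldsymbol{i}^*$ conditionally uniform over the complement is both admissible in the PIR-SI setting and induces a uniform $\mathbf{W}$, so that the reduction is faithful in both directions.
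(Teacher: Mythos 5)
Your proposal is correct and follows essentially the same route as the paper: a reduction from PIR-SI (with $D=M+1$) using the capacity result of~\cite{LG2020CISS}, with the privacy verification resting on the same three facts — conditional independence of $\mathbf{Q}_n$ and $\boldsymbol{i}^*$ given $\mathbf{W}$, the identity $\Pr(\boldsymbol{i}^*=i\mid\mathbf{W}=\tilde{\mathrm{W}})=1/D$, and the individual privacy guarantee $\Pr(i\in\mathbf{W}\mid\mathbf{Q}_n=\mathrm{Q}_n)=D/K$. The paper's chain of equalities merely factors out $\Pr(i\in\mathbf{W}\mid\mathbf{Q}_n=\mathrm{Q}_n)$ first rather than summing $\Pr(\mathbf{W}=\tilde{\mathrm{W}}\mid\mathbf{Q}_n=\mathrm{Q}_n)$ over $\tilde{\mathrm{W}}\ni i$ at the end, which is an equivalent rearrangement.
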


\begin{proof}
Similar to the proof of Lemma~\ref{lem:JPLC-Conv}, the main idea is to show a reduction. 
In this case, we show a reduction from the PIR-SI problem~\cite{KGHERS2020} to the IPLC problem. 
The PIR-SI problem with parameters $N,K,M$ is the same as the PIR-PSI problem defined in the proof of Lemma~\ref{lem:JPLC-Conv}, except that in this case it is only required to protect the privacy of the index of the user's uncoded demand, $i^{*}$, and the privacy of the support of the side information, $\mathrm{S}$, does not need to be protected.  
Relying on~\cite[Theorem~1]{LG2020CISS}, the capacity of PIR-SI---defined as the maximum achievable download rate---is given by 
\begin{equation}\label{eq:2}
\left(1+1/N+1/N^2+\dots+1/N^{\lceil K/(M+1)\rceil-1}\right)^{-1},    
\end{equation}
under the same assumptions on the distributions of $\mathbf{S}$ and $\boldsymbol{i}^{*}$ as those in the proof of Lemma~\ref{lem:JPLC-Conv} for the PIR-PSI problem. 
To show a reduction from PIR-SI to IPLC, we need to prove that the PIR-SI problem with parameters $N,K,M$ can be solved by any IPLC protocol for the setting with $N$ servers, $K$ messages, and demand's support size $D=M+1$. 
Using this reduction and relying on~\eqref{eq:2}, a simple proof by contradiction similar to that in the proof of Lemma~\ref{lem:JPLC-Conv} yields the result of the lemma. 

To show a reduction, we follow the exact same line as in the proof of Lemma~\ref{lem:JPLC-Conv}, except that in this case we show how to solve the PIR-SI problem using an IPLC protocol, instead of solving the PIR-PSI problem using a JPLC protocol. 
The proof of recoverability of the user's uncoded  demand is the same as before. 
To prove that the privacy of the index of the uncoded demand is protected, we need to show that ${\Pr(\boldsymbol{i}^{*}=i|\mathbf{Q}_n = \mathrm{Q}_n) = \Pr(\boldsymbol{i}^{*}=i) = 1/K}$ for all ${i\in [K]}$.

For each $i\in [K]$, we denote by $\mathbbmss{W}_i$ the set of all $\tilde{\mathrm{W}}\in \mathbbmss{W}$ such that $i\in \tilde{\mathrm{W}}$.  
For all $i\in [K]$, for all $n\in [N]$, we have 
\begin{align}
& \Pr(\boldsymbol{i}^{*}=i|\mathbf{Q}_n = \mathrm{Q}_n) \nonumber \\
& = \Pr(\boldsymbol{i}^{*}=i, i\in \mathbf{W}|\mathbf{Q}_n = \mathrm{Q}_n) \label{eq:3}\\
& = \Pr(i\in \mathbf{W}|\mathbf{Q}_n= \mathrm{Q}_n)\Pr(\boldsymbol{i}^{*}=i|\mathbf{Q}_n = \mathrm{Q}_n,i\in \mathbf{W}) \label{eq:4}\\
& = \frac{D}{K}\times \Pr(\boldsymbol{i}^{*}=i|\mathbf{Q}_n = \mathrm{Q}_n,i\in \mathbf{W}) \label{eq:5}\\
& = \frac{D}{K}\times \sum_{\tilde{\mathrm{W}}\in \mathbbmss{W}} \Pr(\boldsymbol{i}^{*}=i, \mathbf{W} = \tilde{\mathrm{W}}|\mathbf{Q}_n = \mathrm{Q}_n,i\in \mathbf{W}) \label{eq:6}\\
& = \frac{D}{K}\times \sum_{\tilde{\mathrm{W}}\in \mathbbmss{W}_i} \Pr(\boldsymbol{i}^{*}=i, \mathbf{W} = \tilde{\mathrm{W}}|\mathbf{Q}_n = \mathrm{Q}_n,i\in \mathbf{W}) \label{eq:7}\\
& = \frac{D}{K}\times \sum_{\tilde{\mathrm{W}}\in \mathbbmss{W}_i} \Pr(\mathbf{W} = \tilde{\mathrm{W}}|\mathbf{Q}_n = \mathrm{Q}_n,i\in \mathbf{W}) \nonumber \\
& \quad \quad \quad \quad \quad \quad \quad \times \Pr(\boldsymbol{i}^{*}=i|\mathbf{Q}_n = \mathrm{Q}_n,\mathbf{W} = \tilde{\mathrm{W}}) \label{eq:8}\\
& = \frac{D}{K}\times \sum_{\tilde{\mathrm{W}}\in \mathbbmss{W}_i} \Pr(\mathbf{W} = \tilde{\mathrm{W}}|\mathbf{Q}_n = \mathrm{Q}_n,i\in \mathbf{W}) \nonumber \\
& \quad \quad \quad \quad \quad \quad \quad \times \Pr(\boldsymbol{i}^{*}=i|\mathbf{W} = \tilde{\mathrm{W}}) \label{eq:9}\\
& = \frac{D}{K}\times \frac{1}{D} \times \sum_{\tilde{\mathrm{W}}\in \mathbbmss{W}_i} \Pr(\mathbf{W} = \tilde{\mathrm{W}}|\mathbf{Q}_n = \mathrm{Q}_n,i\in \mathbf{W}) \label{eq:10}\\
& = \frac{1}{K}, \label{eq:11}
\end{align} 
where~\eqref{eq:3} holds because $\mathbf{W} = \{\boldsymbol{i}^{*}\}\cup \mathbf{S}$;
~\eqref{eq:4} follows from the chain rule of probability;
~\eqref{eq:5} holds because any IPLC protocol satisfies the individual privacy condition, i.e., ${\Pr(i\in \mathbf{W}|\mathbf{Q}_n = \mathrm{Q}_n) = \Pr(i\in \mathbf{W})}$, and ${\Pr(i\in \mathbf{W})=D/K}$;
~\eqref{eq:6} follows from the law of total probability;
~\eqref{eq:7} holds because   ${\Pr(\mathbf{W}=\tilde{\mathrm{W}}|i\in\mathbf{W})=0}$ for all ${\tilde{\mathrm{W}}\in \mathbbmss{W}_i}$;
~\eqref{eq:8} follows from the chain rule of probability;
~\eqref{eq:9} holds because by the same arguments as in the proof of Lemma~\ref{lem:JPLC-Conv}, for all ${\tilde{\mathrm{W}}\in \mathbbmss{W}_i}$, ${\mathbf{Q}_n=\mathrm{Q}_n}$ and ${\boldsymbol{i}^{*}=i}$ are conditionally independent given ${\mathbf{W} = \tilde{\mathrm{W}}}$;
~\eqref{eq:10} holds because ${\Pr(\boldsymbol{i}^{*}=i|\mathbf{W}=\tilde{\mathrm{W}}) = 1/D}$ for all ${\tilde{\mathrm{W}}\in \mathbbmss{W}_i}$, as shown in the proof of Lemma~\ref{lem:JPLC-Conv}; 
and~\eqref{eq:11} holds because given ${\mathbf{Q}_n = \mathrm{Q}_n}$ and ${i\in \mathbf{W}}$, it follows that ${\mathbf{W}\in \mathbbmss{W}_i}$, i.e., 
${\Pr(\mathbf{W}\in \mathbbmss{W}_i|\mathbf{Q}_n = \mathrm{Q}_n,i\in \mathbf{W})} = 1$,
and 
hence, ${\sum_{\tilde{\mathrm{W}}\in\mathbbmss{W}_i} \Pr(\mathbf{W} = \tilde{\mathrm{W}}|\mathbf{Q}_n = \mathrm{Q}_n,i\in \mathbf{W})=1}$ because ${\mathbf{W}=\tilde{\mathrm{W}}}$ for all ${\tilde{\mathrm{W}}\in \mathbbmss{W}_i}$ are disjoint events.

By~\eqref{eq:11}, all $i\in [K]$ are equally likely to be the index of the user's uncoded demand, from each server's perspective. 
Thus, the privacy of index of the uncoded demand is protected.
\end{proof}

\subsection{Achievability Scheme}\label{subsec:MSIPLC-Ach}
In this section, we present an IPLC protocol, termed \emph{Multi-Server Partition-and-Code with Partial Interference Alignment}, for all parameters $N,K,D$ such that $R=0$ or $R\mid D$, where $R \triangleq K\pmod D$. 
When $R=0$, our protocol is capacity-achieving for any $q\geq 2$ and any $T$ that is an integer multiple of $N^{M_1}$, where $M_1\triangleq K/D$. 
When $R\mid D$, our protocol achieves the capacity for any $q\geq D/R+1$ and any $T$ that is an integer multiple of $N^{M_2}$, where $M_2\triangleq \lfloor K/D\rfloor+D/R$.
An example of this protocol is given in Section~\ref{sec:EX}. 

The 
protocol consists of three steps. 
Steps 2 and 3 of this protocol are the same as those in our JPLC protocol (and hence omitted to avoid repetition), expect that in this protocol, the number of ``coded messages'' is $\lceil K/D\rceil$, and the number of ``coded combinations'' is $M_1$ or $M_2$ when ${R=0}$ or ${R\mid D}$, respectively.  
Step 1 of this protocol, however, differs from Step 1 of our JPLC protocol, and is as described below.\vspace{0.125cm} 

\textbf{Step 1:} 
Utilizing the single-server IPLT protocol of~\cite{EHS2021IndividualISIT} for the special case in which one linear combination is required, the user first constructs a specific $J\times K$ matrix $\mathrm{G}$, where $J\triangleq \lceil K/D \rceil$. 
The steps of this protocol are omitted for brevity, and only the matrix $\mathrm{G}$ being constructed is presented below.     

Recall that $\mathrm{W}$ and $\mathrm{V}$ denote the support and the coefficient vector of the user's demand $\mathrm{Z}$. 
Suppose ${\mathrm{W}=\{i_1,\dots,i_D\}}$, ${[K]\setminus \mathrm{W} = \{i_{D+1},\dots,i_K\}}$, and ${\mathrm{V} = [v_1,\dots,v_D]}$.

\begin{figure*}
\begin{equation}\label{eq:14}
\setlength\arraycolsep{2.15pt}
\begin{bmatrix}
\bovermat{$D$}{\alpha_{1,1} & \cdots & \alpha_{1,D}}&  & &  & & &  &  &  & \\
 & &  & \bovermat{$D$}{\alpha_{2,1} & \cdots & \alpha_{2,D}} & & &  &  &  & \\
 &  & &  &  &  & & \ddots & & & & \\
 &  &  &  &  &  & & & & \bovermat{$D$}{\alpha_{\frac{K}{D},1} & \cdots & \alpha_{\frac{K}{D},D}}  
\end{bmatrix}
\end{equation}
\end{figure*}

\begin{figure*}
\begin{equation}\label{eq:15}
\setlength\arraycolsep{3pt}
\begin{bmatrix}
\bovermat{$D$}{\alpha_{1,1}&\cdots&\alpha_{1,D}}& & & & & & & & & & & & & & &\\
& & & &\ddots & & & & & & & & & & & & &\\
&  &  & & & & \bovermat{$D$}{\alpha_{n,1}&\cdots&\alpha_{n,D}} & & & & & & & & &\\
&  &  & & & & & & & \alpha_{n+1,1} &\cdots&\alpha_{n+1,R}& 
&&\cdots&&\alpha_{n+m,1}&\cdots&\alpha_{n+m,R} \\
&  &  & & & & & & &
\bundermat{$R$}{\alpha_{n+1,1}\omega_1&\cdots&\alpha_{n+1,R}\omega_1} & 
&&\cdots&& \bundermat{$R$}{\alpha_{n+m,1}\omega_{m}&\cdots&\alpha_{n+m,R}\omega_{m}}
\end{bmatrix}
\end{equation}
\end{figure*}

\vspace{0.25cm}
\emph{Case of $R=0$:} In this case, the matrix $\mathrm{G}$ is obtained by applying a carefully chosen permutation $\pi$---specified below, on the columns of the matrix given by~\eqref{eq:14} with parameters $\alpha_{i,j}$'s---defined shortly. 
Note that $J=K/D$.


For a randomly chosen permutation $\sigma$ on $[D]$ and a randomly chosen ${i^{*}\in [K/D]}$, 
\begin{itemize}
    \item ${\alpha_{i^{*},j} = v_{\sigma(j)}}$ for all ${j\in [D]}$.
    \item ${\alpha_{i,j}}$'s for all ${i\in [K/D]\setminus \{i^{*}\}}$ and all ${j\in [D]}$ are randomly chosen (with replacement) elements from ${\mathbb{F}_q^{\times}}$.
    \item $\pi$ is a randomly chosen permutation on $[K]$ such that ${\pi((i^{*}-1)D+j) = i_{\sigma(j)}}$ for all ${j\in [D]}$.
\end{itemize}


Let $\mathrm{W}_1,\dots,\mathrm{W}_{M}$ be an arbitrary ordering of the elements in $\mathbbmss{W}$. 
Without loss of generality, assume that $\mathrm{W}_k\triangleq \{\pi(j)\}_{j\in [(k-1)D+1:kD]}$ for all $k\in [M_1]$, where $M_1\triangleq K/D$. 
Note that $M_1 = J$.

As shown in~\cite{EHS2021IndividualISIT}, the matrix $\mathrm{G}$ has the following properties: 
\begin{itemize}
    \item[(i)] For every $k\in [M_1]$, the row space of $\mathrm{G}$ contains a unique row-vector $\mathrm{U}_k$ with support $\mathrm{W}_k$ and first nonzero coordinate equal to the first coordinate of $\mathrm{V}$ (i.e., $v_1$), and for every $k\in [M_1+1:M]$, the row space of $\mathrm{G}$ does not contain any row-vector with support $\mathrm{W}_k$. \item[(ii)] There exists a unique $k^{*}\in [M_1]$ such that $\mathrm{W}_{k^{*}} = \mathrm{W}$.
    \item[(iii)] The row-vector $\mathrm{U}_{k^{*}}$ with support $\mathrm{W}_{k^{*}} = \mathrm{W}$, when restricted to its nonzero coordinates, is equal to $\mathrm{V}$.
\end{itemize}

Using (i)-(iii) and similar arguments as in Step~1 of our JPLC protocol, it follows that, for every ${k\in [M_1]}$, 
there exists a unique row-vector $\mathrm{C}_k$ of length $J$ such that $\mathrm{U}_k = \mathrm{C}_k \mathrm{G}$, and  
$\mathrm{U}_{k^{*}}\mathrm{X} = \mathrm{C}_{k^{*}}\mathrm{G}\mathrm{X}$ is equal to the user's demand $ \mathrm{Z} = \mathrm{V}\mathrm{X}_{\mathrm{W}}$. 

Then, the user sends the matrix $\mathrm{G}$ and the row-vectors $\{\mathrm{C}_k\}_{k\in [M_1]}$ to each of the servers.

\vspace{0.25cm}
\emph{Case of $R\mid D$:} In this case, the matrix $\mathrm{G}$ is obtained by applying a carefully designed permutation $\pi$---specified below, on the columns of the matrix given by~\eqref{eq:15} with parameters $\alpha_{i,j}$'s and $\omega_i$'s---defined shortly. 
Note that ${J=(K-R)/D+1}$. 

For simplifying the notation, let ${n\triangleq (K-R)/D-1}$ and 
${m\triangleq D/R+1}$.
The parameters $\omega_1,\dots,\omega_{m}$ are $m$ arbitrary distinct elements from $\mathbb{F}_q$. 
The parameters $\alpha_{i,j}$'s and the permutation $\pi$ are determined using one of two algorithms, referred to as Algorithms 1 and 2, where Algorithm 1 or 2 is selected with probability $D/K$ or $1-D/K$, respectively. 

\emph{Algorithm 1:} For a randomly chosen permutation $\sigma$ on $[D]$ and a randomly chosen $i^{*}\in [n]$, 
\begin{itemize}
    \item $\alpha_{i^{*},j}=v_{i_{\sigma(j)}}$ for all $j\in [D]$. 
    \item $\alpha_{i,j}$'s for all $i\in [n]\setminus \{i^{*}\}$ and all $j\in [D]$ are chosen randomly (with replacement) from $\mathbb{F}_q^{\times}$.
    \item $\alpha_{n+i,j}$'s for all $i\in [m]$ and all $j\in [R]$ are chosen randomly (with replacement) from $\mathbb{F}_q^{\times}$. 
    \item $\pi$ is a randomly chosen permutation on $[K]$ such that ${\pi((i^{*}-1)D+j) = i_{\sigma(j)}}$ for all $j\in [D]$. 
\end{itemize}

\emph{Algorithm 2:} For a randomly chosen permutation $\sigma$ on $[D]$ and a randomly chosen ${i_{*}\in [m]}$, 
\begin{itemize}
    \item $\alpha_{i,j}$'s for all $i\in [n]$ and all $j\in [D]$ are chosen randomly (with replacement) from $\mathbb{F}_q^{\times}$.
    \item $\alpha_{n+i,j} = v_{\sigma((i-1)R+j)}/(\omega_{i_{*}}-\omega_i)$ for all ${i\in [i_{*}-1]}$ and all ${j\in [R]}$.
    \item $\alpha_{n+i_{*},j}$ for all $j\in [R]$ are  chosen randomly (with replacement) from $\mathbb{F}_q^{\times}$.
    \item $\alpha_{n+i,j} = v_{\sigma((i-1)R-R+j)}/(\omega_{i_{*}}-\omega_i)$ for all ${i\in [i_{*}+1:m]}$ and all ${j\in [R]}$.
    
    \item $\pi$ is a randomly chosen permutation on $[K]$ such that ${\pi(nD+j) = i_{\sigma(j)}}$ for all ${j\in [(i_{*}-1)R]}$, and ${\pi(nD+R+j) = i_{\sigma(j)}}$ for all ${j\in [(i_{*}-1)R+1:D]}$. 
\end{itemize}

Let $\mathrm{W}_1,\dots,\mathrm{W}_{M}$ be an arbitrary ordering of the elements in $\mathbbmss{W}$. 
Assume, without loss of generality, that $\mathrm{W}_k\triangleq \{\pi(j)\}_{j\in [(k-1)D+1:kD]}$ for all $k\in [n]$, and 
$\mathrm{W}_k\triangleq \{\pi(j)\}_{j\in [nD+(k-n-1)R]} \cup \{\pi(j)\}_{j\in [nD+(k-n)R+1:K]}$ for all $k\in [n+1:n+m]$. 
Let $M_2\triangleq n+m$. 
Note that $M_2 = (K-R)/D+D/R > (K-R)/D+1 = J$. 

As shown in~\cite{EHS2021IndividualISIT}, the matrix $\mathrm{G}$ satisfies the same set of properties as those for the case of $R=0$, when $M_1$ is replaced by $M_2$. 
This implies that, for every $k\in [M_2]$, there is a unique $\mathrm{C}_k$ such that $\mathrm{U}_k = \mathrm{C}_k \mathrm{G}$, and $\mathrm{U}_{k^{*}}\mathrm{X} = \mathrm{C}_{k^{*}}\mathrm{G}\mathrm{X}= \mathrm{V}\mathrm{X}_{\mathrm{W}}=\mathrm{Z}$.

The user then sends the matrix $\mathrm{G}$ and the row-vectors $\{\mathrm{C}_k\}_{k\in [M_2]}$ to each of the servers.



\begin{lemma}\label{lem:IPLC-Ach}
The Multi-Server Partition-and-Code with Partial Interference Alignment protocol is a capacity-achieving IPLC protocol.
\end{lemma}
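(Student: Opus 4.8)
The plan is to verify three things about the Multi-Server Partition-and-Code with Partial Interference Alignment protocol: (a) that its download rate equals the right-hand side of~\eqref{eq:IPLCCap}, (b) that it satisfies the recoverability condition, and (c) that it satisfies the individual privacy condition. The structure of the argument mirrors the proof of Lemma~\ref{lem:JPLC-Ach}, substituting the single-server IPLT scheme of~\cite{EHS2021IndividualISIT} for the single-server JPLT scheme and tracking the slightly more delicate parameters $J=\lceil K/D\rceil$, $M_1=K/D$, and $M_2=\lfloor K/D\rfloor+D/R$.

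First I would establish the rate. In Step~3 the user and servers run the PLC scheme of~\cite{SJ2018} with the $J$ coded messages $\mathrm{Y}_1,\dots,\mathrm{Y}_J$ as the independent messages and the coded combinations $\mathrm{Z}_1,\dots,\mathrm{Z}_{M}$ (with $M=M_1$ or $M=M_2$) as the candidate linear combinations. Since $\mathbf{X}_1,\dots,\mathbf{X}_K$ are independent and uniform over $\mathbb{F}_q^T$ and the rows of $\mathrm{G}$ are linearly independent by construction, the coded messages $\mathbf{Y}_1,\dots,\mathbf{Y}_J$ are independent and uniform over $\mathbb{F}_q^T$. As checked in Step~1, $M_1=J$ when $R=0$ and $M_2>J$ when $R\mid D$, so in both cases $M\geq J$, and the rate of the scheme of~\cite{SJ2018} is ${(1+1/N+\dots+1/N^{J-1})^{-1}}$. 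Since $J=\lceil K/D\rceil$, this equals~\eqref{eq:IPLCCap}, which matches the converse bound of Lemma~\ref{lem:IPLC-Conv}.

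Next I would handle recoverability: the scheme of~\cite{SJ2018} guarantees the user recovers the candidate combination $\mathrm{Z}_{k^{*}}$, and by properties~(ii)--(iii) of $\mathrm{G}$ (in both the $R=0$ and $R\mid D$ cases), $\mathrm{Z}_{k^{*}}=\mathrm{U}_{k^{*}}\mathrm{X}=\mathrm{V}\mathrm{X}_{\mathrm{W}}=\mathrm{Z}$, so the user obtains the demand. For individual privacy, the key observation is a decoupling into two layers, exactly as in Lemma~\ref{lem:JPLC-Ach}: revealing the collection $\{\mathrm{Z}_k\}$ (equivalently, $\mathrm{G}$ and $\{\mathrm{C}_k\}$) to the servers leaks nothing about whether a given index $i\in[K]$ lies in $\mathbf{W}$, because the single-server IPLT protocol of~\cite{EHS2021IndividualISIT} is designed so that the associated distribution on coded combinations makes, for each $i\in[K]$, the total probability over supports of size $D$ containing $i$ the same — this is precisely requirement~(i) in the description of Step~1, and is where the randomized choice between Algorithms~1 and~2 with probabilities $D/K$ and $1-D/K$ is used. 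Then, conditioned on $\{\mathrm{Z}_k\}$, the PLC scheme of~\cite{SJ2018} hides the index $k^{*}$ of the retrieved coded combination, so no further information about $\mathbf{W}$ (in particular about membership of any $i$) leaks from the queries. Combining the two layers gives ${\Pr(i\in\mathbf{W}\mid\mathbf{Q}_n=\mathrm{Q}_n)=\Pr(i\in\mathbf{W})}$ for all $i\in[K]$ and all $n\in[N]$.

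The main obstacle is the individual-privacy argument in the $R\mid D$ case: unlike the joint-privacy setting, here the row space of $\mathrm{G}$ contains row-vectors with support $\mathrm{W}_k$ only for $k\in[M_2]$ (not for all of $\mathbbmss{W}$), and the candidate supports $\mathrm{W}_1,\dots,\mathrm{W}_{M_2}$ are not uniformly distributed over $\mathbbmss{W}$; individual privacy is recovered only after averaging appropriately over the Algorithm~1/Algorithm~2 randomization and the permutations $\sigma,\pi$. I would therefore lean on the corresponding privacy guarantee of~\cite[Theorem~1]{EHS2021IndividualISIT} as a black box for the single-server layer, and confine the new work to checking that composing it with the PLC layer of~\cite{SJ2018} preserves the marginal membership probabilities — which follows because the \cite{SJ2018} queries are a function of $\mathrm{G},\{\mathrm{C}_k\}$, and $k^{*}$ only through a distribution that is invariant to which index $k^{*}$ is the true one.
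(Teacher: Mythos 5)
Your proposal is correct and follows essentially the same route as the paper's proof: it establishes the rate $(1+1/N+\dots+1/N^{\lceil K/D\rceil-1})^{-1}$ by invoking the PLC scheme of~\cite{SJ2018} with $J=\lceil K/D\rceil$ independent coded messages, obtains recoverability from $\mathrm{Z}_{k^{*}}=\mathrm{Z}$, and proves individual privacy by the same two-layer decoupling (the single-server IPLT guarantee of~\cite{EHS2021IndividualISIT} for revealing the coded combinations, composed with the index-hiding guarantee of~\cite{SJ2018}). Your additional remarks on the $R\mid D$ case only make explicit what the paper delegates to the cited IPLT result, so there is no substantive difference.
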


\begin{proof}
The proof of optimality of the rate follows from the result of Lemma~\ref{lem:IPLC-Conv} and the exact same arguments as those in the proof of Lemma~\ref{lem:JPLC-Ach}, except that in this case $J=\lceil K/D \rceil$, 
and $M$ is replaced by $M_1$ or $M_2$ for the case of $R=0$ or $R\mid D$, respectively. 
The user's demand can be recovered because the PLC scheme of~\cite{SJ2018} ensures the recoverability of the coded combination required by the user. 
The proof of individual privacy is as follows. 
To avoid repetition, we only present the proof for the case of $R=0$. 
The proof for the case of $R\mid D$ follows from the exact same line, when $M_1$ is replaced by $M_2$.  
Let $\mathrm{Z}_1,\dots,\mathrm{Z}_{M_1}$ be the ``coded combinations'' constructed in Step 2 of the protocol. 
Revealing $\mathrm{Z}_1,\dots,\mathrm{Z}_{M_1}$ to each of the servers does not violate the individual privacy condition. 
This is because the single-server IPLT protocol of~\cite{EHS2021IndividualISIT}---used for constructing $\mathrm{Z}_1,\dots,\mathrm{Z}_{M_1}$, exposes these coded combinations to the server, and is guaranteed to protect the privacy of every index in the support of the user's demand. 
Given $\mathrm{Z}_1,\dots,\mathrm{Z}_{M_1}$, the individual privacy condition is satisfied, because the PLC scheme of~\cite{SJ2018} protects the privacy of the index of the coded combination required by the user.
\end{proof}

\section{examples}\label{sec:EX}
In this section, we provide an illustrative example of each of the proposed protocols. 

\begin{example}\label{ex-1}
\normalfont 
Consider a JPLC setting in which there are $N = 2$ servers each storing ${K=3}$ messages $X_1,X_2,X_3\in\mathbb{F}^{8}_{3}$, and 
the user wants to compute one linear combination of ${D=2}$ messages $X_1$ and $X_3$, say, $Z = X_1+2X_3$. 
Note that for this example, $\mathrm{W}=\{1,3\}$ and $\mathrm{V}=[1,2]$. 
Using the notation in Section~\ref{subsec:MSJPLC-Ach},  $(i_1,i_2,i_3) = (1,3,2)$, $(v_1,v_2)=(1,2)$, and 
\[\pi = \begin{pmatrix} 1 & 2 & 3\\ 1 & 3 & 2\end{pmatrix}.\] 
Taking $v_3=1$ and $(\omega_1,\omega_2,\omega_3)=(0,1,2)$ in Step 1 of the proposed JPLC protocol, the user constructs the matrix 
\begin{equation*}
\mathrm{G}=
\begin{bmatrix}
1 & 2 & 1\\
0 & 1 & 1
\end{bmatrix}.
\end{equation*}
It is easy to verify that $\mathrm{G}$ generates a $[3,2]$ maximum distance separable (MDS) code.
Since the minimum distance of this code is $2$, for every $2$-subset of $\{1,2,3\}$, i.e., $\mathrm{W}_1 = \{1,2\}$, $\mathrm{W}_2 = \{1,3\}$, and $\mathrm{W}_3 = \{2,3\}$, the row space of $\mathrm{G}$ contains a unique vector (up to scalar multiplication) with support $\mathrm{W}_1$, $\mathrm{W}_2$, and $\mathrm{W}_3$, respectively, 
e.g., the vectors $\mathrm{U}_1=[1,1,0]$, 
$\mathrm{U}_2=[1,0,2]$, and
$\mathrm{U}_3=[0,1,1]$, respectively. 
Let $\mathrm{X} = [X^{\transpose}_1,X^{\transpose}_2,X^{\transpose}_3]^{\transpose}$. 
Let $\mathrm{Y} = \mathrm{G}\mathrm{X}$, and let $\mathrm{Y}_1,\mathrm{Y}_2$ denote the rows of $\mathrm{Y}$, 
\begin{align*}
\mathrm{Y}_{1}& =X_1+2X_2+X_3,\\
\mathrm{Y}_{2}&=X_2+X_3.
\end{align*}
Note that $\mathrm{Y}_1$ and $\mathrm{Y}_2$ are linearly independent combinations of the messages $X_1,X_2,X_3$.
Thus, the coded combinations $\mathrm{Z}_1, \mathrm{Z}_2,  \mathrm{Z}_3$ given by
\begin{align*}
\mathrm{Z}_1 & = \mathrm{U}_1\mathrm{X} = X_1+X_2, \\    
\mathrm{Z}_2 & = \mathrm{U}_2\mathrm{X} = X_1+2X_3, \\
\mathrm{Z}_3 & = \mathrm{U}_3\mathrm{X} = X_2+X_3,    
\end{align*} can be written in terms of the coded messages $\mathrm{Y}_1,\mathrm{Y}_2$ as 
\begin{align*}
\mathrm{Z}_1 & = \mathrm{Y}_1+2\mathrm{Y}_2,\\    
\mathrm{Z}_2 & = \mathrm{Y}_1+\mathrm{Y}_2,\\
\mathrm{Z}_3 & = \mathrm{Y}_2.
\end{align*}
Let $\mathrm{C}_1 = [1,2]$, $\mathrm{C}_2 = [1,1]$, and $\mathrm{C}_3 = [0,1]$ be the coefficient vectors corresponding to $\mathrm{Z}_1$, $\mathrm{Z}_2$, and $\mathrm{Z}_3$, respectively, i.e., $\mathrm{Z}_k = \mathrm{C}_k \mathrm{Y}$ for $k\in [3]$. 
Note that the user's demand $\mathrm{Z} =X_1+2X_3= \mathrm{Z}_2$. 

Next, the user sends the matrix $\mathrm{G}$ and the vectors $\mathrm{C}_1,\mathrm{C}_2,\mathrm{C}_3$ to each of the servers. 
Note that the user provides the servers with this information so that the servers know the set of all coded combinations among which the user wishes to compute one. 
Then, the user and the servers follow the PLC scheme of~\cite{SJ2018} for $N=2$ servers, $M=3$ coded combinations $\mathrm{Z}_1,\mathrm{Z}_2,\mathrm{Z}_3$, each of which is a linear combination of $J=2$ (independent) coded messages  $\mathrm{Y}_1,\mathrm{Y}_2$, so that the user can privately recover the coded combination $\mathrm{Z}_2$. 
Note that $\mathrm{Z}_1,\mathrm{Z}_2,\mathrm{Z}_3$ each contains $T=8$ symbols (from $\mathbb{F}_3$). 
Since $T$ is an integer multiple of $N^M=8$, the scheme of~\cite{SJ2018} is applicable to this setting. 
For all $k\in [3]$ and all $i\in [8]$, let $\mathrm{Z}_k(i)$ denote the $i$th symbol of $\mathrm{Z}_k$, and for all $i\in [8]$, let $(a_i,b_i,c_i)\triangleq (s_i\mathrm{Z}_1(\tau(i)),s_i\mathrm{Z}_2(\tau(i)),s_i\mathrm{Z}_3(\tau(i)))$ for a randomly chosen permutation $\tau$ on $[8]$ and a randomly chosen integer $s_i\in \{-1,+1\}$. 
The user's queries generated by the scheme of~\cite{SJ2018} for this example are presented in Table~\ref{tab:ex1}. 
The details are omitted to avoid repetition. 


\begin{table}
\caption{The user's queries to the servers for Example~1.}
\label{tab:ex1}
\centering
\resizebox{0.55\columnwidth}{!}{%
\begin{tabular}{| c | c |}
  \hline 
  \textbf{Server 1} & \textbf{Server 2} \\
  \hline
  $a_1, b_1, c_1$ & $a_2, b_2, c_2$ \\
  \hline
  $a_2-b_3$ & $a_1-b_5$ \\
  $b_4+c_2$ & $b_6+c_1$  \\
  $a_4-c_3$ & $a_6-c_5$  \\
  \hline
  $a_6-b_7-c_5$ & $a_4-b_8-c_3$  \\
  \hline 
\end{tabular}
}
\end{table}

Since any one of $\mathrm{Z}_1,\mathrm{Z}_2,\mathrm{Z}_3$ can be written as a linear combination of the other two, 
any two of $a_1,b_1,c_1$ suffice to recover the other one. 
Similarly, any two of $a_2,b_2,c_2$ suffice to recover the other one. 
Thus, server~1 answers by sending 2 of the symbols $a_1,b_1,c_1$, say, $a_1$ and $b_1$, and the 4 remaining coded symbols $a_2-b_3,b_4+c_2,a_4-c_3,a_6-b_7-c_5$. 
Similarly, server~2's answer consists of 2 of the symbols $a_2,b_2,c_2$, say, $a_2$ and $b_2$, and the 4 remaining coded symbols $a_1-b_5,b_6+c_1,a_6-c_5,a_4-b_8-c_3$. 
Thus, the total number of symbols being downloaded from both servers is $12$, and 
the rate of this scheme is $8/12 = 2/3$, 
which matches the upper bound in Lemma~\ref{lem:JPLC-Conv} for $N=2$, $K=3$, and $D=2$, 
i.e.,  ${(1+1/N+\dots+1/N^{K-D})^{-1} = (1+1/2)^{-1} = 2/3}$.

To show that the coded combination $\mathrm{Z}_2$ can be recovered from the answer, it suffices to show that the symbols ${b_1,\dots,b_8}$ are recoverable from the answer.  
From Table~\ref{tab:ex1}, it can be seen that the user can readily recover $b_1$ and $b_2$ from the answer.
Since the answer also contains $a_1$ and $a_2$, the user can locally compute $c_1$ and $c_2$. 
Thus, the user can recover ${b_3,b_4,b_5,b_6}$ by subtracting off the contribution of ${a_2,c_2,a_1,c_1}$ from ${a_2-b_3,b_4+c_2,a_1-b_5,b_6+c_1}$, respectively. 
Since the answer also contains $a_6-c_5$ and $a_4-c_3$, the user can recover $b_7$ and $b_8$ by subtracting off the contribution of $a_6-c_5$ and $a_4-c_3$ from $a_6-b_7-c_5$ and $a_4-b_8-c_3$, respectively. 
\end{example}

\begin{example}\label{ex-2}
\normalfont 
Consider an IPLC setting in which there are $N = 2$ servers each storing ${K=5}$ messages $X_1,\dots,X_{5}\in\mathbb{F}^{16}_{3}$, and the user wants to compute one linear combination of ${D=2}$ messages $X_1$ and $X_3$, say, $Z = X_1+2X_3$.
Note that $R = K \pmod D = 1$, and hence, $R\mid D$. 
For this example, $\mathrm{W}=\{1,3\}$ and $\mathrm{V}=[1,2]$. 
Using the notation in Section~\ref{subsec:MSIPLC-Ach}, $n=1$, $m=3$, $(i_1,i_2,i_3,i_4,i_5) = (1,3,2,4,5)$, and $(v_1,v_2)=(1,2)$. 
Using Algorithm 2 (selected with probability $1-D/K = 3/5$), and taking $(\omega_1,\omega_2,\omega_3) = (2,1,0)$, $i_{*}=1$, 
\[
\sigma = \begin{pmatrix}1 & 2\\ 2 & 1\end{pmatrix}, \quad \text{and} \quad \pi = \begin{pmatrix} 1 & 2 & 3 & 4 & 5\\ 4 & 2 & 5 & 3 & 1\end{pmatrix},
\] in Step 1 of the proposed IPLC protocol, the user constructs the matrix 
\begin{equation*}
\mathrm{\mathrm{G}}=
\begin{bmatrix}
0 & 2 & 0 & 1 & 0\\
2 & 0 & 2 & 0 & 1\\
0 & 0 & 2 & 0 & 2\\
\end{bmatrix}.
\end{equation*}

It is easy to verify that for each of the $2$-subsets $\mathrm{W}_1 = \{1,2\}$, $\mathrm{W}_2 = \{3,4\}$, $\mathrm{W}_3 = \{3,5\}$, and $\mathrm{W}_4 = \{4,5\}$, the row space of $\mathrm{G}$ contains a unique vector (up to scalar multiplication) with support $\mathrm{W}_1$, $\mathrm{W}_2$, $\mathrm{W}_3$, and $\mathrm{W}_4$, respectively, 
e.g., the vectors $\mathrm{U}_1=[0,2,0,1,0]$, 
$\mathrm{U}_2=[0,0,1,0,1]$, 
$\mathrm{U}_3=[1,0,0,0,1]$, and $\mathrm{U}_4=[1,0,2,0,0]$, respectively. 
Let $\mathrm{X} = [X^{\transpose}_1,\dots,X^{\transpose}_5]^{\transpose}$. 
Let $\mathrm{Y} = \mathrm{G}\mathrm{X}$, and let $\mathrm{Y}_1,\mathrm{Y}_2,\mathrm{Y}_3$ denote the rows of $\mathrm{Y}$, 
\begin{align*}
\mathrm{Y}_{1}& =2X_2+X_4,\\
\mathrm{Y}_{2}&=2X_1+2X_3+X_5,\\
\mathrm{Y}_{3}&=2X_3+2X_5.
\end{align*}
Note that $\mathrm{Y}_1,\mathrm{Y}_2,\mathrm{Y}_3$ are linearly independent combinations of the messages $X_1,\dots,X_5$.
Thus, the coded combinations $\mathrm{Z}_1,\dots,\mathrm{Z}_4$ given by
\begin{align*}
\mathrm{Z}_1 & = \mathrm{U}_1\mathrm{X} = 2X_2+X_4, \\    
\mathrm{Z}_2 & = \mathrm{U}_2\mathrm{X} = X_3+X_5, \\
\mathrm{Z}_3 & = \mathrm{U}_3\mathrm{X} = X_1+X_5, \\
\mathrm{Z}_4 & = \mathrm{U}_4\mathrm{X} = X_1+2X_3,  
\end{align*} can be written in terms of the coded messages $\mathrm{Y}_1,\mathrm{Y}_2$ as 
\begin{align*}
\mathrm{Z}_1 & = \mathrm{Y}_1,\\    
\mathrm{Z}_2 & = 2\mathrm{Y}_2,\\
\mathrm{Z}_3 & = 2\mathrm{Y}_2+\mathrm{Y}_3,\\
\mathrm{Z}_4 & = 2\mathrm{Y}_2+2\mathrm{Y}_3.
\end{align*}
Let $\mathrm{C}_1 = [1,0,0]$, $\mathrm{C}_2 = [0,2,0]$, $\mathrm{C}_3 = [0,2,1]$, and $\mathrm{C}_4 = [0,2,2]$ be the coefficient vectors corresponding to $\mathrm{Z}_1$, $\mathrm{Z}_2$, $\mathrm{Z}_3$, and $\mathrm{Z}_4$, respectively, i.e., $\mathrm{Z}_k = \mathrm{C}_k \mathrm{Y}$ for $k\in [4]$. 
Note that the user's demand $\mathrm{Z} =X_1+2X_3= \mathrm{Z}_4$. 

Next, the user sends the matrix $\mathrm{G}$ and the vectors $\mathrm{C}_1,\dots,\mathrm{C}_4$ to each of the servers. 
Note that the user provides the servers with this information so that the servers know the set of all coded combinations among which the user wishes to compute one. 
Then, the user and the servers follow the PLC scheme of~\cite{SJ2018} for $N=2$ servers, $M=4$ coded combinations $\mathrm{Z}_1,\dots,\mathrm{Z}_4$, each of which is a linear combination of $J=3$ (independent) coded messages  $\mathrm{Y}_1,\mathrm{Y}_2,\mathrm{Y}_3$, so that the user can privately recover the coded combination $\mathrm{Z}_4$. 
Note that $\mathrm{Z}_1,\dots,\mathrm{Z}_4$ each contains $T=16$ symbols (from $\mathbb{F}_3$). 
Since $T$ is an integer multiple of $N^M=16$, the scheme of~\cite{SJ2018} is applicable to this setting. 
For all $k\in [4]$ and all $i\in [16]$, let $\mathrm{Z}_k(i)$ denote the $i$th symbol of $\mathrm{Z}_k$, and for all $i\in [16]$, let $(a_i,b_i,c_i,d_i)\triangleq (s_i\mathrm{Z}_1(\tau(i)),s_i\mathrm{Z}_2(\tau(i)),s_i\mathrm{Z}_3(\tau(i)),s_i\mathrm{Z}_4(\tau(i)))$ for a randomly chosen permutation $\tau$ on $[16]$ and a randomly chosen integer $s_i\in \{-1,+1\}$. 
The user's queries generated by the scheme of~\cite{SJ2018} for this example are presented in Table~\ref{tab:ex2}. 
The details are omitted to avoid repetition.

\begin{table}[t]
\caption{The user's queries to the servers for Example 2.} 
\label{tab:ex2}
\centering
\resizebox{0.85\columnwidth}{!}{%
\begin{tabular}{| c | c |}
  \hline 
  \textbf{Server 1} & \textbf{Server 2} \\
  \hline
  $a_1, b_1, c_1, d_1$ & $a_2, b_2, c_2, d_2$ \\
  \hline
  $a_2-d_3$ & $a_1-d_6$ \\
  $b_2-d_4$ & $b_1-d_7$  \\
  $c_2-d_5$ & $c_1-d_8$  \\
  $a_4-b_3$ & $a_7-b_6$  \\
  $a_5-c_3$ & $a_8-c_6$  \\
  $b_5-c_4$ & $b_8-c_7$  \\
  \hline 
  $a_7-b_6+d_9$ & $a_4-b_3+d_{12}$  \\
  $a_8-c_6+d_{10}$ & $a_5-c_3+d_{13}$  \\
  $b_8-c_7+d_{11}$ & $b_5-c_4+d_{14}$  \\
  $a_{11}-b_{10}+c_9$ & $a_{14}-b_{13}+c_{12}$  \\
  \hline
  $a_{14}-b_{13}+c_{12}-d_{15}$ & $a_{11}-b_{10}+c_{9}-d_{16}$  \\
  \hline
\end{tabular}
}
\end{table}

Since any one of $\mathrm{Z}_1,\dots,\mathrm{Z}_4$ can be written as a linear combination of the other three, 
any 3 of the symbols $a_1,b_1,c_1,d_1$ suffice to recover the other one. 
Similarly, any 3 of the symbols $a_2,b_2,c_2,d_2$ suffice to recover the other one. 
Thus, server~1 answers by sending 3 of the symbols $a_1,b_1,c_1,d_1$, say, the symbols $a_1,b_1,c_1$, and 
the 11 coded symbols ${a_2-d_3,b_2-d_4,\dots,a_{14}-b_{13}+c_{12}-d_{15}}$. 
Similarly, server~2's answer consists of 3 of the symbols $a_2,b_2,c_2,d_2$, say, the symbols $a_2,b_2,c_2$, and 
the 11 coded symbols ${a_1-d_6,b_1-d_7,\dots,a_{11}-b_{10}+c_9-d_{16}}$. 
Since the total number of symbols being downloaded from both servers is $28$, the rate of this scheme is $16/28 = 4/7$, 
which matches the upper bound in Lemma~\ref{lem:IPLC-Conv} for ${N=2}$, ${K=5}$, and ${D=2}$, i.e.,  ${(1+1/N+\dots+1/N^{\lceil K/D\rceil-1})^{-1}} = {(1+1/2+1/4)^{-1}} = 4/7$. 

To show that the coded combination $\mathrm{Z}_4$ can be recovered from the answer, it suffices to show that the symbols ${d_1,\dots,d_{16}}$ are recoverable from the answer.  
From Table~\ref{tab:ex1}, it can be seen that the user can readily recover $a_1,b_1,c_1$ and $a_2,b_2,c_2$ from the answer, and hence the user can locally compute $d_1$ and $d_2$ from these two sets of symbols, respectively.
Then, the user can recover ${d_3,\dots,d_8}$ by subtracting off the contribution of ${a_2,b_2,c_2,a_1,b_1,c_1}$ from the coded symbols ${a_2-d_3,b_2-d_4,c_2-d_5,a_1-d_6,b_1-d_7,c_1-d_8}$, respectively. 
Similarly, 
the user can recover $d_9,\dots,d_{14}$ by subtracting off the contribution of the coded symbols ${a_4-b_3,a_5-c_3,b_5-c_4}$ and ${a_7-b_6, a_8-c_6, b_8-c_7}$ from the coded symbols ${a_7-b_6+d_9, a_8-c_6+d_{10},b_8-c_7+d_{11}}$ and ${a_4-b_3+d_12, a_5-c_3+d_{13},b_5-c_4+d_{14}}$, respectively. 
Lastly, $d_{15}$ and $d_{16}$ can be recovered by subtracting off the contribution of $a_{14}-b_{13}+c_{12}$ and $a_{11}-b_{10}+c_9$ from  $a_{14}-b_{13}+c_{12}-d_{15}$ and $a_{11}-b_{10}+c_{9}-d_{16}$, respectively.

\end{example}

\bibliographystyle{IEEEtran}
\bibliography{PIR_PC_Refs}

\begin{thebibliography}{10}
\providecommand{\url}[1]{#1}
\csname url@samestyle\endcsname
\providecommand{\newblock}{\relax}
\providecommand{\bibinfo}[2]{#2}
\providecommand{\BIBentrySTDinterwordspacing}{\spaceskip=0pt\relax}
\providecommand{\BIBentryALTinterwordstretchfactor}{4}
\providecommand{\BIBentryALTinterwordspacing}{\spaceskip=\fontdimen2\font plus
\BIBentryALTinterwordstretchfactor\fontdimen3\font minus
  \fontdimen4\font\relax}
\providecommand{\BIBforeignlanguage}[2]{{%
\expandafter\ifx\csname l@#1\endcsname\relax
\typeout{** WARNING: IEEEtran.bst: No hyphenation pattern has been}%
\typeout{** loaded for the language `#1'. Using the pattern for}%
\typeout{** the default language instead.}%
\else
\language=\csname l@#1\endcsname
\fi
#2}}
\providecommand{\BIBdecl}{\relax}
\BIBdecl

\bibitem{HES2021JointJournal}
\BIBentryALTinterwordspacing
A.~Heidarzadeh, N.~Esmati, and A.~Sprintson, ``Single-server private linear
  transformation: The joint privacy case,'' June 2021. [Online]. Available:
  \url{arXiv:2106.05220}
\BIBentrySTDinterwordspacing

\bibitem{HES2021IndividualJournal}
\BIBentryALTinterwordspacing
------, ``Single-server private linear transformation: The individual privacy
  case,'' June 2021. [Online]. Available: \url{arXiv:2106.05222}
\BIBentrySTDinterwordspacing

\bibitem{HS2019PC}
A.~{Heidarzadeh} and A.~{Sprintson}, ``Private computation with side
  information: The single-server case,'' in \emph{2019 IEEE International
  Symposium on Information Theory (ISIT)}, July 2019, pp. 1657--1661.

\bibitem{HS2020}
------, ``Private computation with individual and joint privacy,'' in
  \emph{2020 IEEE International Symposium on Information Theory (ISIT)}, 2020,
  pp. 1112--1117.

\bibitem{EHS2021JointISIT}
N.~Esmati, A.~Heidarzadeh, and A.~Sprintson, ``Private linear transformation:
  The joint privacy case,'' in \emph{2021 IEEE International Symposium on
  Information Theory (ISIT)}, July 2021.

\bibitem{EHS2021IndividualISIT}
------, ``Private linear transformation: The individual privacy case,'' in
  \emph{2021 IEEE International Symposium on Information Theory (ISIT)}, July
  2021.

\bibitem{SJ2018}
H.~{Sun} and S.~A. {Jafar}, ``The capacity of private computation,'' \emph{IEEE
  Transactions on Information Theory}, vol.~65, no.~6, pp. 3880--3897, 2019.

\bibitem{MM2018}
M.~{Mirmohseni} and M.~A. {Maddah-Ali}, ``Private function retrieval,'' in
  \emph{2018 Iran Workshop on Communication and Information Theory (IWCIT)},
  April 2018, pp. 1--6.

\bibitem{OK2018}
S.~A. Obead and J.~Kliewer, ``Achievable rate of private function retrieval
  from {MDS} coded databases,'' \emph{2018 IEEE International Symposium on
  Information Theory (ISIT)}, pp. 2117--2121, 2018.

\bibitem{OLRK2018}
S.~A. Obead, H.-Y. Lin, E.~Rosnes, and J.~Kliewer, ``Capacity of private linear
  computation for coded databases,'' \emph{2018 56th Annual Allerton Conference
  on Communication, Control, and Computing (Allerton)}, pp. 813--820, 2018.

\bibitem{TM2019No2}
B.~{Tahmasebi} and M.~A. {Maddah-Ali}, ``Private sequential function
  computation,'' in \emph{2019 IEEE International Symposium on Information
  Theory (ISIT)}, 2019, pp. 1667--1671.

\bibitem{OLRK2020}
S.~A. Obead, H.-Y. Lin, E.~Rosnes, and J.~Kliewer, ``{Private Function
  Computation for Noncolluding Coded Databases},'' \emph{arXiv e-prints}, p.
  arXiv:2003.10007, Mar. 2020.

\bibitem{YLR2020}
Y.~Yakimenka, H.-Y. Lin, and E.~Rosnes, ``On the capacity of private monomial
  computation.''\hskip 1em plus 0.5em minus 0.4em\relax ETH Zurich, 02/2020
  2020, pp. 31--35.

\bibitem{KGHERS2020}
S.~{Kadhe}, B.~{Garcia}, A.~{Heidarzadeh}, S.~{El Rouayheb}, and
  A.~{Sprintson}, ``Private information retrieval with side information,''
  \emph{IEEE Transactions on Information Theory}, vol.~66, no.~4, pp.
  2032--2043, 2020.

\bibitem{CWJ2020}
Z.~Chen, Z.~Wang, and S.~A. Jafar, ``The capacity of {T-}private information
  retrieval with private side information,'' \emph{IEEE Transactions on
  Information Theory}, vol.~66, no.~8, pp. 4761--4773, 2020.

\bibitem{LG2020CISS}
S.~Li and M.~Gastpar, ``Converse for multi-server single-message pir with side
  information,'' in \emph{2020 54th Annual Conference on Information Sciences
  and Systems (CISS)}, 2020, pp. 1--6.

\bibitem{SSM2018}
S.~P. {Shariatpanahi}, M.~J. {Siavoshani}, and M.~A. {Maddah-Ali},
  ``Multi-message private information retrieval with private side
  information,'' in \emph{2018 IEEE Information Theory Workshop (ITW)}, 2018,
  pp. 1--5.

\end{thebibliography}

\end{document}